\algnewcommand\algorithmicforeach{\textbf{for each}}
\newtheorem{definition}{Definition}
\newtheorem{theorem}{Theorem}
\newtheorem{corollary}{Corollary}
\newtheorem{lemma}{Lemma}
\newcommand{\problemtitle}[1]{\gdef\@problemtitle{#1}}
\newcommand{\probleminput}[1]{\gdef\@probleminput{#1}}
\newcommand{\problemquestion}[1]{\gdef\@problemquestion{#1}}
  \par\addvspace{.5\baselineskip}
  \par\addvspace{.5\baselineskip}
\begin{document}


\title{Quantum State Preparation via Free Binary Decision Diagram}%

\author{Yu Tanaka}
\email{yukanata@gmail.com}
\affiliation{
Advanced Research Laboratory, Digital {\textnormal\&} Technology Platform I{\textnormal\&}S,
Sony Group Corporation, 1-7-1 Konan, Minato-ku, Tokyo, 108-0075, Japan
}%
\affiliation{%
Department of Physics, Graduate School of Science, The University of Tokyo,
Hongo 7-3-1, Bunkyo-ku, Tokyo 113-0033, Japan
}%

\author{Hayata Yamasaki}

\affiliation{%
Department of Physics, Graduate School of Science, The University of Tokyo,
Hongo 7-3-1, Bunkyo-ku, Tokyo 113-0033, Japan
}%

\author{Mio Murao}

\affiliation{%
Department of Physics, Graduate School of Science, The University of Tokyo,
Hongo 7-3-1, Bunkyo-ku, Tokyo 113-0033, Japan
}%

\date{\today}

\begin{abstract}
    Quantum state preparation (QSP) is the task of preparing a quantum state from a given classical description.
    The classical description of an $n$-qubit quantum state may have $\exp(O(n))$ parameters in general, which makes preparing the corresponding state inherently inefficient in the worst case.
    However, in many practical cases, we may be able to employ suitable data structures for QSP. 
    Ordered binary decision diagrams (OBDDs) and free BDDs (FBDDs) are data structures that can represent large-scale data in a compressed way.
    An efficient QSP method for a subclass of OBDDs is known, but requires an $O(2^n)$-sized quantum circuit in general, while QSP based on FBDDs, which includes OBDDs as a special case, remains unexplored.
    Here, we construct a state-preparation procedure for QSP when the classical description of a quantum state is given by an FBDD with weighted edges and analyze the space and time complexity of QSP in this setting.
    We provide a nontrivial example of an $n$-qubit state that can be represented by a weighted FBDD with $N=O(\mathrm{poly}(n))$ nodes rather than $\mathrm{exp}(O(n))$. 
    We show that any quantum state represented by a QSP-admissible weighted FBDD with $N$ nodes can be prepared by an $O(N)$-sized quantum circuit using at most $N$ ancillary qubits, exponentially improving the required circuit size for QSP compared to other BDD-based QSPs\@.
    We also provide another example of an $n$-qubit state that can be represented by a weighted FBDD with $N=O(n^2)$ nodes and $O(n^2)$ ancillary qubits, for which the known amplitude-amplification-based QSP cost expression is exponential in $n$.
    These results provide techniques to employ FBDDs as a tool for broadening the possibility of efficient QSP\@.
\end{abstract}

\maketitle

\section{Introduction}

Quantum state preparation (QSP) is the task of preparing a quantum state from a given classical description of the state, which may be given by oracle access to the coefficients of the quantum state or by some data structure to store the classical description of the quantum state.
QSP is a prerequisite input step for major quantum algorithms such as Harrow-Hassidim-Lloyd (HHL) algorithm~\cite{Harrow2009-ci}, Hamiltonian simulation~\cite{Low2019hamiltonian,10.1145/3313276.3316366}, and quantum machine learning with optimized random features~\cite{Yamasaki2020-xh,yamasaki2022exponential,pmlr-v202-yamasaki23a}.
In many quantum algorithms, data-loading or state-preparation subroutines can dominate the overall resource cost and may therefore limit the achievable quantum speedup~\cite{Aaronson2015-tw}. 
Therefore, the computational complexity of QSP is of great interest from both practical and fundamental viewpoints.

In general, the classical description of an $n$-qubit quantum state may have $\exp(O(n))$ parameters, which are inherently hard to deal with in polynomial computational resources in $n$.
In Refs.~\cite{Bergholm2005-to, Plesch2011-mx}, it is shown that an $n$-qubit quantum state can be prepared by an $O(2^n)$-depth quantum circuit composed of single-qubit unitaries and \textsc{CNOT} gates, without ancillary qubits. 
References~\cite{Zhang2022-bb,Rosenthal2021-pj,10.1109/TCAD.2023.3244885} gave an $O(n)$-depth quantum circuit to prepare an $n$-qubit quantum state using single- and two-qubit gates, but the cost of the required number of ancillary qubits grows exponentially in $n$. 
The trade-off between the circuit depth and the number of ancillary qubits for QSP is studied in Ref.~\cite{Rosenthal2021-pj}.

On the other hand, in many cases, we can employ suitable data structures to represent the coefficients of a quantum state in a compressed way, e.g., by using polynomially many nonzero coefficients~\cite{Zhang2022-bb}, discrete approximations of an efficiently integrable probability density function (Grover and Rudolph~\cite{Grover2002-yz}), a matrix product state (MPS)~\cite{Holmes2020-fe, Garcia-Ripoll2021-ft}, prior knowledge of the coefficients~\cite{Sanders2019-fb}, polynomial or Fourier approximations~\cite{McArdle2022-mu}, and ordered binary decision diagrams (OBDDs)~\cite{Mozafari2020-lq, Mozafari2022-mj}.
In their settings, those results resolve the trade-off between the circuit depth and the number of ancillary qubits for QSP.

Function-loading methods further illustrate how suitable classical structure can make QSP efficient. Reference~\cite{Moosa2024FourierSeriesLoader} introduced the Fourier Series Loader, which exactly prepares states specified by truncated Fourier-series representations using circuits whose depth is linear in the number of qubits in the loaded register and in the number of retained Fourier coefficients. Reference~\cite{GonzalezConde2024PolynomialEncoding} studied amplitude encoding of real polynomial functions through matrix-product-state constructions and a DHWT/QSVT-based construction. These results show that the choice of input representation is crucial: Fourier-series, polynomial, or tensor-network structure can make state preparation efficient for important classes of states. (See Table~\ref{tbl:comparison} for comparison.)

However, in general, QSP with a $\mathrm{poly}(n)$-sized classical description of any $n$-qubit quantum state is known to be computationally hard~\cite{Aharonov2007-kx}.
Therefore, one challenge is to develop a state-preparation procedure for QSP with an $O(\mathrm{poly}(n))$-sized quantum circuit to prepare a wider class of quantum states with $d=2^{\Omega(n)}$ nonzero coefficients.

To address the challenge, we employ a data structure for representing Boolean functions, called binary decision diagrams (BDDs)~\cite{Breitbar_P1995-dg}.
A $\mathrm{BDD}_f$ representing a Boolean function $f:\{0,1\}^n \to \{0,1\}$ is a rooted directed acyclic graph with one or two nodes of outdegree zero and other nodes of outdegree two, where each node in the BDD represents a Boolean variable.
In this paper, our method specifically utilizes a {\it free} binary decision diagram (FBDD), rather than a binary tree (BT) or an OBDD used in standard BDD-based QSP methods.
QSP using FBDDs has not been explored because FBDDs do not share a key property that both BTs and OBDDs have. This property is that the order of occurrences of the Boolean variables on each path from the root to the terminal nodes is consistent. It ensures that, at each depth, all the Boolean variables corresponding to nodes at the same depth represent the same qubit.
Leveraging this property, for example, the method in Ref.~\cite{Prakash2014QuantumAF} achieves quantum state preparation (QSP) by coherently accessing the qubits labeled by nodes at the same depth.

In this paper, we construct a state-preparation procedure for target states specified by compact weighted FBDDs (WFBDDs). A WFBDD combines two kinds of information: its underlying FBDD describes the Boolean support structure, while its complex edge weights determine the amplitudes through the locally normalized quantum-state semantics introduced in Sec.~\ref{Sec3}. Our main result is twofold. First, we show that every WFBDD whose locally normalized representative satisfies our QSP-admissibility condition defines a normalized $n$-qubit quantum state. Second, given such a WFBDD with $N$ nodes, we construct a quantum circuit that prepares the corresponding state using $O(N)$ single- and two-qubit gates and at most $N$ ancillary qubits.

The efficiency claims in this work are conditional on the input model
described above. Specifically, we assume that the relevant Boolean function
or target state is supplied by a compact FBDD or WFBDD description,
respectively. For WFBDD inputs, we assume that the locally normalized
representative satisfies the QSP-admissibility condition of
Def.~\ref{def:qsp_admissible_wfbdd}. In our circuit constructions, the
FBDD/WFBDD description is treated as classical compile-time input from
which the quantum gates are generated. The cost of finding, constructing,
or compressing such a description is not included in the stated circuit
complexity. If one instead implements the diagram operations through
qROM/QRAM-style coherent lookup of node, edge, or weight data during the
quantum computation, the cost of constructing that data-access mechanism
must be accounted for separately. Thus, our results should be understood
as efficient state-preparation procedures for instances in which compact
FBDD/WFBDD descriptions are already available.

{\nolinenumbers
\begin{table}[t]
    \centering
    \begin{tabular}{c|c|c|c|c|c} \hline
          & Depth & \# Ancillary qubits & \begin{tabular}{c} \# Amplitude \\ amplifications \end{tabular} & \begin{tabular}{c} Assumption \\ / Oracle \end{tabular} & \begin{tabular}{c} An $n$-qubit quantum state \\ that can be prepared \end{tabular} \\ \hline
        \begin{tabular}{c} FBDD-based \\ (This work) \end{tabular} & $O({\rm poly}(n))$ & $O({\rm poly}(n))$ & - & \begin{tabular}{c} single- and \\ two-qubit gates \end{tabular} & \begin{tabular}{c} Described by a compact \\ QSP-admissible \\ WFBDD with ${\rm poly}(n)$ nodes. \end{tabular} \\ \hline
        OBDD-based~\cite{Mozafari2020-lq} & $O(2^n)$ & $0$ & \multirow{2}{*}{-} & \multirow{2}{*}{\begin{tabular}{c} Single- and \\ two-qubit gates \end{tabular}} & \begin{tabular}{c} Described by an OBDD \\ with ${\rm poly}(n)$ nodes. \end{tabular} \\
        \cite{Mozafari2022-mj} & $O(kn)$ & $1$ & & & \begin{tabular}{c} / with $k$ paths. \end{tabular} \\ \hline
        Width-first~\cite{Bergholm2005-to, Plesch2011-mx} & $O(2^n)$ & $0$ & - & \begin{tabular}{c} Single-qubit and \\ \textsc{CNOT} gates \end{tabular} & Generally applicable. \\ \hline
        Depth-first~\cite{Zhang2022-bb,Rosenthal2021-pj} & $O(n)$ & $O(2^n)$ & \multirow{2}{*}{-} & \multirow{2}{*}{\begin{tabular}{c} Single- and \\ two-qubit gates \end{tabular}} & \multirow{2}{*}{Generally applicable.} \\
        Balanced~\cite{10.1109/TCAD.2023.3244885} & $O(\frac{2^n}{m+n})$ & $m \in \left[2n, O\left(\frac{2^n}{n\log n}\right)\right]$ & & & \\ \hline
        \begin{tabular}{c} Sparse state \\ preparation~\cite{Zhang2022-bb} \end{tabular} & $O(\log (nd))$ & $O(nd \log d)$ & - & \begin{tabular}{c} Single- and \\ two-qubit gates \end{tabular} & \begin{tabular}{c} The number of nonzero \\ coefficients is $d$. \end{tabular} \\ \hline
        Grover-Rudolph~\cite{Grover2002-yz} & $O(n)$ & $O(g)$ & - & \begin{tabular}{c} Coherent \\ arithmetic \end{tabular} & \begin{tabular}{c} Efficiently integrable \\ probability distribution. \end{tabular} \\ \hline
        \begin{tabular}{c} Matrix product state \\ encoding~\cite{Holmes2020-fe, Garcia-Ripoll2021-ft} \end{tabular} & $O(n)$ & $0$ & - & \begin{tabular}{c} Single- and \\ two-qubit gates \end{tabular} & \begin{tabular}{c} Approximated by \\ $O(1)$ bond dimension. \end{tabular} \\ \hline
        Black-box~\cite{Sanders2019-fb} & $O(g \log_2 g)^{\ast 1}$ & $2g+1$ & \multirow{2}{*}{$O(2^{n/2})^{\ast 2}$} & \multirow{2}{*}{\begin{tabular}{c} Amplitude \\ oracle \end{tabular}} & \multirow{2}{*}{Generally applicable.} \\
        \cite{Bausch2022fastblackboxquantum} & $O(g)^{\ast 1}$ & $g+\log_2 g$ & & & \\ \hline
        QET-based~\cite{McArdle2022-mu} & $O\left(\frac{nd}{\mathcal{F}_f^{[N]}}\right)^{\ast 3}$ & $4$ & - & \begin{tabular}{c} Single- and \\ two-qubit gates \end{tabular} & \begin{tabular}{c} Polynomial or Fourier \\ series approximation. \end{tabular} \\ \hline
        \begin{tabular}{c} Fourier Series \\ Loader~\cite{Moosa2024FourierSeriesLoader} \end{tabular} & $O(n+M_F)^{\ast 4}$ & $0$--$1$ & - & \begin{tabular}{c} Single- and \\ two-qubit gates \end{tabular} & \begin{tabular}{c} Truncated Fourier-series \\ approximations. \end{tabular} \\ \hline
        \begin{tabular}{c} Matrix product state \\ encoding~\cite{GonzalezConde2024PolynomialEncoding} \end{tabular} & \begin{tabular}{c} exact $O(n^2d_p)$;\\ approx. $O(n)$$^{\ast 5}$ \end{tabular} & $0$ & - & \begin{tabular}{c} Single- and\\ two-qubit gates \end{tabular} & \begin{tabular}{c} Real polynomials;\\ compact MPS / small $\chi$ \end{tabular} \\ \hline
        \begin{tabular}{c} DHWT/QSVT~\cite{GonzalezConde2024PolynomialEncoding} \end{tabular} & \begin{tabular}{c} $O(k_0d_p)^{\ast 6}$ \end{tabular} & $O(k_0)$ & $O(1/F)$ & \begin{tabular}{c} Single- and \\ two-qubit gates \end{tabular} & \begin{tabular}{c} Real polynomials;\\ controllable approx. \end{tabular} \\ \hline
        \begin{tabular}{c} Adiabatic state \\ preparation~\cite{Wan2020-ui, Rattew2022-zz} \end{tabular} & $O\left( \frac{g^2}{\mathcal{F}^4_f} \right)^{\ast 7}$ & $O(n+g)$ & - & \begin{tabular}{c} Amplitude \\ oracle \end{tabular} & \begin{tabular}{c} Efficiently computable \\ function. \end{tabular} \\ \hline
    \end{tabular}
    \caption{Comparison with previous QSP methods. In the figure, $n$ denotes the number of qubits, $k$ denotes the number of root-to-terminal paths in an OBDD, and $g$ denotes the bit precision used to represent the coefficients of a quantum state. The number of paths $k$ can be exponential in $n$ even when the OBDD has a compact node representation. For the FBDD-based method, a WFBDD is an FBDD equipped with complex edge weights; for QSP, the associated state is interpreted through a locally normalized representative that is QSP-admissible, as introduced in Secs.~\ref{Sec2} and~\ref{Sec3}. The WFBDD description is assumed to be supplied as input, and qROM/QRAM construction costs are not included in the listed circuit complexity. $\ast 1$ The number of non-Clifford gates used per one amplitude amplification. $\ast 2$ Reference~\cite{Bausch2022fastblackboxquantum} reduces the number of amplitude amplifications to $O(1)$ for a quantum state such that $|| \bar{A} ||_2 / || \alpha ||_1 = O(2^{n/2})$, where $\alpha$ is the amplitude vector, and $\bar{A}$ is the amplitudes' average bit weight vector. $\ast 3$ In the figure, $d$ is the degree of a polynomial, and $\mathcal{F}_f^{[N]}$ is the approximation of the L2-norm filling-fraction $\sqrt{\int_a^b |f(x)|^2 dx} / \sqrt{(b-a)|f|^2_{{\rm max}}}$ for a function $f: [a,b] \to \mathbb{R}$. $\ast 4$ For the Fourier Series Loader, $M_F$ is the number of retained Fourier coefficients, and $n$ denotes the total number of qubits in the loaded register. In Ref.~\cite{Moosa2024FourierSeriesLoader}, a $D$-dimensional function is loaded into $Dn_0$ qubits. The $0$--$1$ ancilla entry corresponds to the periodic case and its non-periodic extension. $\ast 5$ For the MPS-based polynomial-amplitude encoding of Ref.~\cite{GonzalezConde2024PolynomialEncoding}, $d_p$ is the polynomial degree. The approximate $O(n)$ entry assumes a fixed small bond dimension, such as $\chi=2$, and does not include the cost of constructing the MPS representation or decomposing the associated multi-qubit unitaries. $\ast 6$ For the DHWT/QSVT approach, $k_0$ is the number of qubits used in the linear-function loader $U_{L,k_0}$, $d_p$ is the polynomial degree, and $F$ is the filling ratio in amplitude amplification. The construction uses $U_{L,k_0}$, its adjoint, and controlled variants; the total cost including amplitude amplification scales as $O(k_0d_p/F)$. $\ast 7$ Query complexity of the amplitude oracle, where $\mathcal{F}_f$ is the filling ratio $\lVert f \rVert_1 / \lVert f \rVert_{\max}$.} \label{tbl:comparison}
\end{table}
}

When $N = \mathrm{poly}(n)$, our FBDD-based state-preparation construction gives a $\mathrm{poly}(n)$-sized quantum circuit for QSP with $\mathrm{poly}(n)$ ancillary qubits, which reduces the ancillary qubits needed for existing QSP methods for preparing a general class of states~\cite{Zhang2022-bb,10.1109/TCAD.2023.3244885,Rosenthal2021-pj}.
Our FBDD-based state-preparation construction exponentially reduces the number of two-qubit gates used by a QSP method~\cite{Mozafari2020-lq} based on ordered BDDs (OBDDs), a proper subset of FBDDs~\cite{Breitbar_P1995-dg}. 

Furthermore, our FBDD-based state-preparation construction also exponentially reduces the number of two-qubit gates used by another OBDD-based method~\cite{Mozafari2022-mj} preparing an $n$-qubit quantum state described by an OBDD with $k = O(2^n)$ paths.
The approach in Ref.~\cite{Mozafari2022-mj} operates by preparing the paths in
an OBDD one-by-one. 
As a result, the circuit complexity is linear in the number $k$ of paths in the OBDD, rather than in the number $N$ of nodes. It is important to emphasize that, even if $N$ is polynomial in the depth of the BDD, $k$ is not necessarily polynomial. 
In contrast, our method achieves circuit complexity that is linear in the number $N$ of nodes, which presents a significant advantage over the approach.

We give a non-trivial example of exponentially compressible data structures by using a symmetric Boolean function~\cite{Breitbar_P1995-dg}.
This result shows the existence of $\Theta(n)$-qubit quantum states with $d = 2^{\Omega(n)}$ nonzero coefficients that can be prepared by a $\mathrm{poly}(n)$-sized quantum circuit with $\mathrm{poly}(n)$ ancillary qubits using our FBDD-based state-preparation construction, but requires an exponential number of ancillary qubits using the algorithm for QSP in Ref.~\cite{Zhang2022-bb}.

Reference~\cite{Vinkhuijzen2024-to} provides an analytical investigation of three widely used classical descriptions of quantum states: MPS, decision diagrams (DDs), and restricted Boltzmann machines (RBMs). The authors map the relative succinctness of these data structures. 
Our results on weighted FBDDs are closely related to one of their results on the DDs called semiring-labeled decision diagrams (${\rm SLDD}_{\times}$), which is interpreted as weighted OBDDs in our paper's terminology. 
This fact implies that the weighted FBDD can provide a more succinct classical description of quantum states than ${\rm SLDD}_{\times}$.

We discuss four possible applications. 
First, we explore how to prepare a uniform superposition of basis states designated by a Boolean function $f:\{0,1\}^n \to \{0,1\}$.
We define a quantum state encoding the Boolean function $f$ as $\ket{\phi_f} := |f|^{-\frac{1}{2}} \sum_{x \in \{ 0,1 \}^n} f(x) \ket{x}$, {\it i.e.}, the amplitude encoding of the Boolean function, where $|f|$ is the number of $x \in \{ 0,1 \}^n$ such that $f(x) = 1$.
Then, for a given $\mathrm{FBDD}_f$ representing the Boolean function $f$, we propose a classical algorithm that assigns weights to the $\mathrm{FBDD}_f$ using $O(N)$ queries to its nodes to derive a classical description of the quantum state encoding $f$.

Second, we give another encoding of the Boolean function $f$ represented by an $\mathrm{FBDD}_f$. We give a classical algorithm for constructing a quantum circuit mapping $\ket{x}$ to $e^{i\theta f(x)} \ket{x}$ using $O(N)$ Toffoli gates and $N-1$ ancillary qubits.

Third, we provide another example of an $n$-qubit quantum state that can be prepared by an $O(n^2)$-sized quantum circuit using $O(n^2)$ ancillary qubits. We show that the cost expression for the number of amplitude-amplification steps in the black-box QSP algorithm of Ref.~\cite{Bausch2022fastblackboxquantum} is exponential for this state. 

Fourth, we discuss how our proposed techniques for QSP can be applied to the block encoding, a technique of encoding a matrix as a block of a unitary, defined in Ref.~\cite{10.1145/3313276.3316366}.
We give two non-trivial examples of efficiently implementable block encodings of non-sparse matrices.

The rest of the paper is organized as follows. 
In Sec.~\ref{Sec2}, we introduce BDDs, FBDDs, OBDDs, and WFBDDs. We also clarify the Boolean path semantics of FBDDs, the weighted path semantics of WFBDDs, QSP-admissible WFBDDs, and locally normalized representatives.
In Sec.~\ref{Sec3}, we define the quantum-state semantics of a WFBDD via its locally normalized representative and give a state-preparation circuit for the corresponding quantum state.
We also give an example of an $O(\mathrm{poly}(n))$-sized $\mathrm{FBDD}_f$ that provides a compact support description for a quantum state with exponentially many nonzero coefficients.
In Sec.~\ref{Sec4}, we construct, from a given $\mathrm{FBDD}_f$ representing a Boolean function, a QSP-admissible WFBDD for the uniform superposition over the satisfying assignments by using effective branch model counts.
In Sec.~\ref{Sec5}, for a given $\mathrm{FBDD}_f$, we give a classical algorithm for constructing a quantum circuit mapping $\ket{x}$ to $e^{i\theta f(x)}\ket{x}$ directly.
In Sec.~\ref{Sec6}, we provide a $\mathrm{WFBDD}_f$ using $n^2$ nodes. We show that our FBDD-based state-preparation circuit can generate the quantum state associated with the $\mathrm{WFBDD}_f$ under the quantum-state semantics of Sec.~\ref{Sec3}, while the known cost expression for the number of amplitude-amplification steps in the black-box QSP algorithm of Ref.~\cite{Bausch2022fastblackboxquantum} is exponential in $n$.
In Sec.~\ref{Sec7}, we discuss block-encoding examples in cases where the required state-preparation ingredients admit compact WFBDD descriptions.
Finally, we give a conclusion.

{\nolinenumbers
\begin{figure}[t]
    \centering
    \includegraphics[keepaspectratio,scale=0.2]{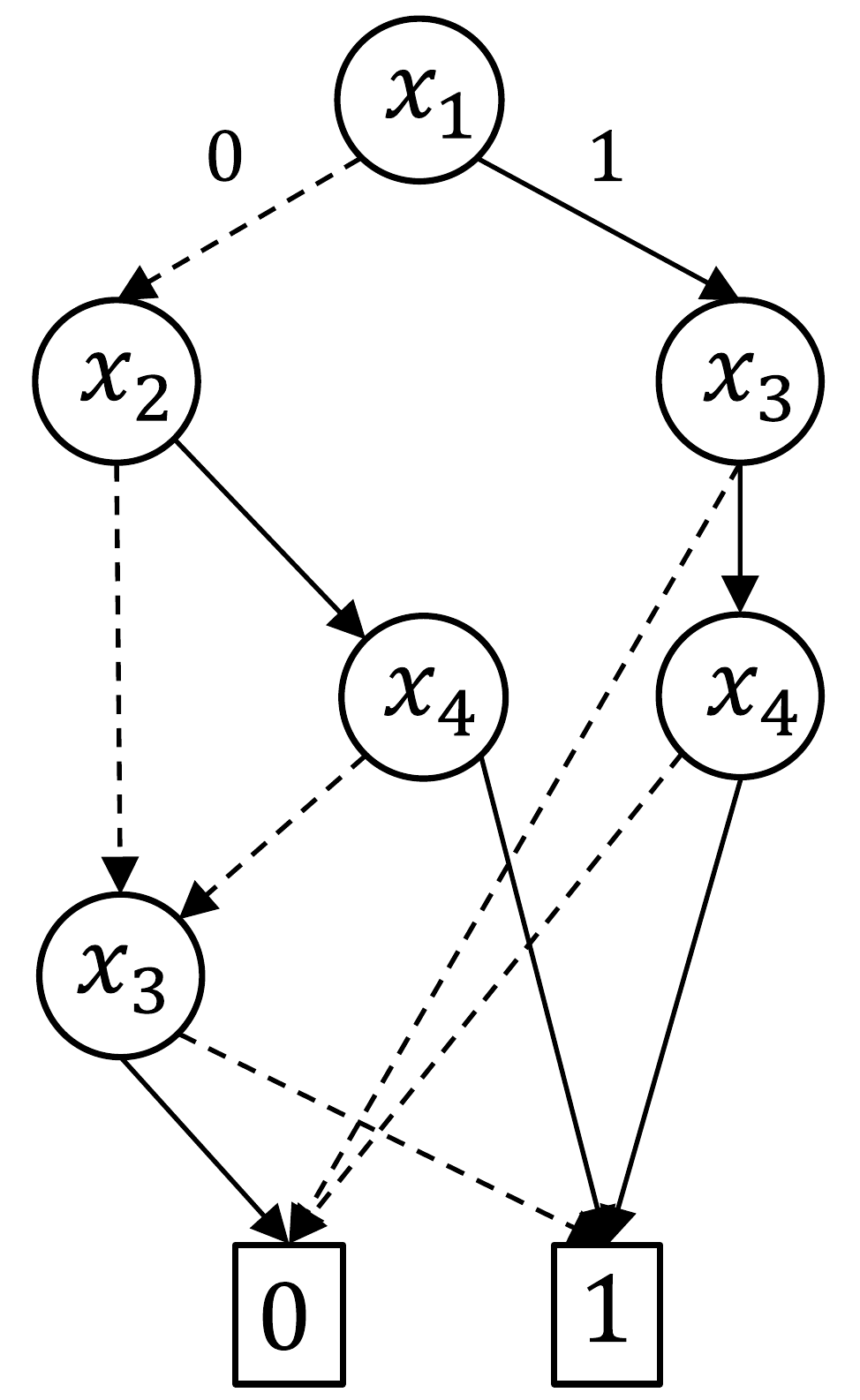}
    \caption{A free binary decision diagram ($\mathrm{FBDD}_f$) representing a Boolean function $f(x_1,x_2,x_3,x_4) = x_1x_3x_4 + \bar{x}_1x_2x_4 + \bar{x}_1\bar{x}_2\bar{x}_3 + \bar{x}_1x_2\bar{x}_3\bar{x}_4$. The $\mathrm{FBDD}_f$ is a rooted directed acyclic graph with one or two rectangular nodes of outdegree zero and the other circular nodes of outdegree two. The rectangular nodes are labeled with $0$ and $1$, respectively, and are called the $0$-terminal node and the $1$-terminal node. The other circular nodes are labeled with the variables $x_1, x_2, x_3$, and $x_4$, called the internal nodes. One node of indegree zero is called the root. Two outgoing edges of each internal node are labeled with $0$ and $1$, respectively, and are called the $0$-edge and the $1$-edge, which are shown as dotted and solid lines. A path from the root to one of the terminal nodes represents a pair of an input bit sequence and the output bit. For given $x_1, x_2, x_3, x_4 \in \{0,1\}$, $f(x_1, x_2, x_3, x_4)$ can be evaluated by starting from the root and selecting the $x_i$-edge in each node until reaching the terminal nodes. If there is no node labeled with a variable on the path, the assignment of the variable is arbitrary. For instance, $x_1,x_2,x_3,x_4=0,0,0,0$ and $0,0,0,1$ share the same path. Note that the word "free" means that the order of occurrences of the variables on each path from the root to the terminal nodes is not fixed.} \label{fig:exam_fbdd}
\end{figure}
}

\section{Preliminaries} \label{Sec2}

In this section, we give some definitions used in this study. 
Let $b$, $[n]$, and $[n]_0$ denote an element of $\{0,1\}$, $\{1,\ldots,n\}$, and $\{0,\ldots,n-1\}$, respectively.
First, we introduce a binary decision diagram (BDD) and its variants, a free binary decision diagram (FBDD) and an ordered binary decision diagram (OBDD). Then, we define an FBDD with weighted edges (WFBDD).

A binary decision diagram (BDD) is a rooted directed acyclic graph that represents a Boolean function $f(x_1,\ldots, x_n)$ of variables $x_1,\ldots, x_n$ and has one or two nodes of outdegree zero, with all other nodes equipped with two labeled outgoing edges. (See Fig.~\ref{fig:exam_fbdd}.)
The nodes of outdegree zero are labeled with $0$ and $1$, respectively, and are called the $0$-terminal node or the $1$-terminal node.
The other nodes are labeled with the variables $x_1,\ldots, x_n$, called the internal nodes.
There is one internal node of indegree zero, called the root.
Two outgoing edges of each internal node are labeled with $0$ and $1$, respectively, and are called the $0$-edge and the $1$-edge of the internal node.
A path from the root to one of the terminal nodes represents a pair of an input bit sequence and the output bit. For given $x_1, \ldots, x_n \in \{0,1\}$, $f(x_1, \ldots, x_n)$ can be evaluated by starting from the root and selecting the $x_i$-edge in each node until reaching the terminal nodes. 
If there is no node labeled with a variable on the path, the assignment of the variable is arbitrary. 

Although a BDD has redundant nodes in general, we can reduce the number of redundant nodes in a given BDD according to two contraction rules, called the redundant node deletion and the equivalent node sharing.
The redundant node deletion is the deletion of an internal node when two outgoing edges from the internal node have the same child node.
The equivalent node sharing is the merging of two internal nodes when the two internal nodes share both of the child nodes of the two outgoing edges from their respective internal nodes.
A BDD to which the contraction rules have been applied exhaustively is called a reduced BDD (RBDD).
(See Fig.~\ref{FIG:BDD_rules}.)
In summary, we give a formal definition of BDDs and RBDDs as follows.

{\nolinenumbers
\begin{figure}[t]
  \begin{minipage}[b]{0.3\linewidth}
    \centering
    \includegraphics[keepaspectratio, scale=0.15]{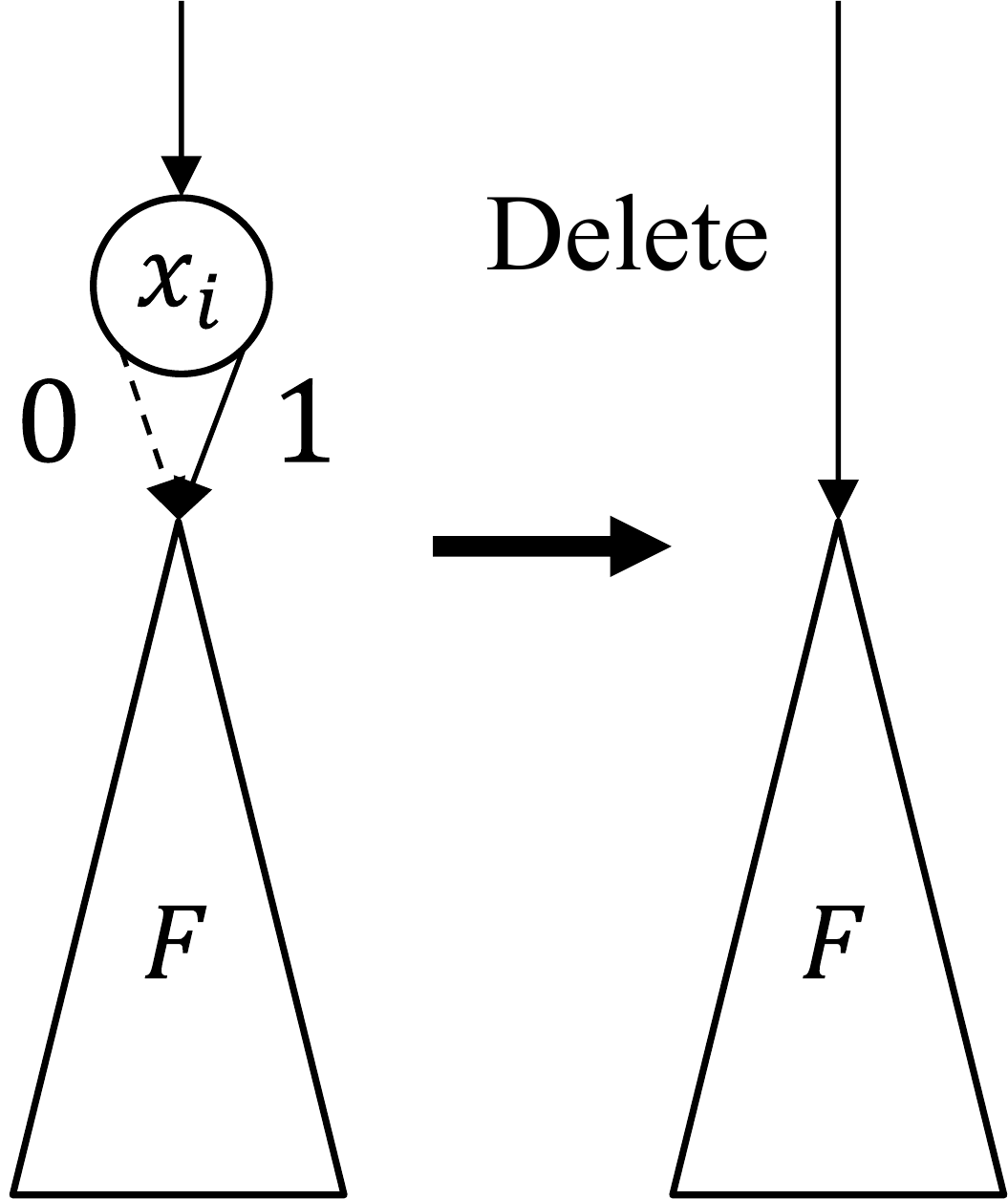}
    \subcaption{}
  \end{minipage}
  \begin{minipage}[b]{0.3\linewidth}
    \centering
    \includegraphics[keepaspectratio, scale=0.15]{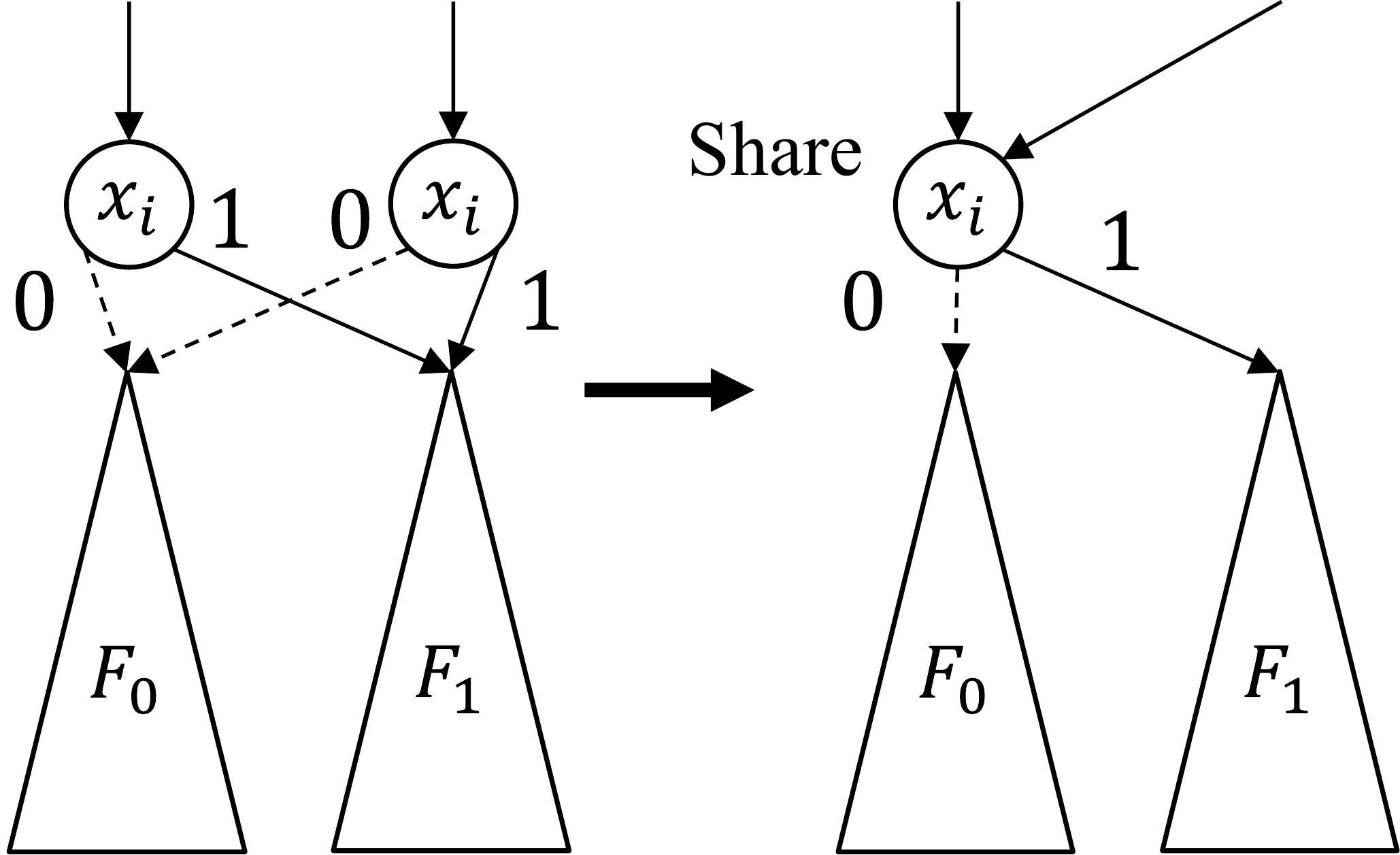}
    \subcaption{}
  \end{minipage}
  \begin{minipage}[b]{0.4\linewidth}
    \centering
    \includegraphics[keepaspectratio, scale=0.14]{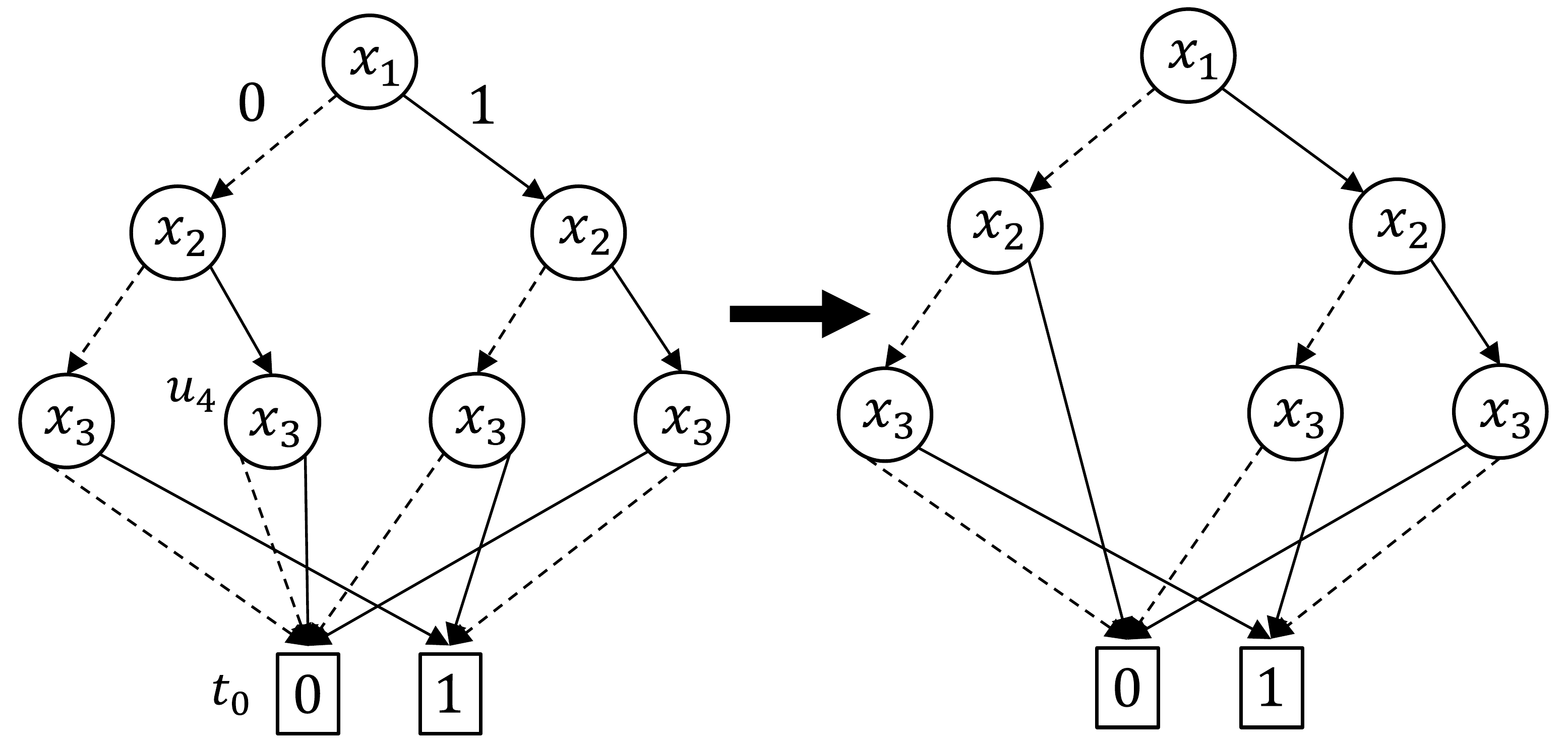}
    \subcaption{}
  \end{minipage}
  \begin{minipage}[b]{0.4\linewidth}
    \centering
    \includegraphics[keepaspectratio, scale=0.14]{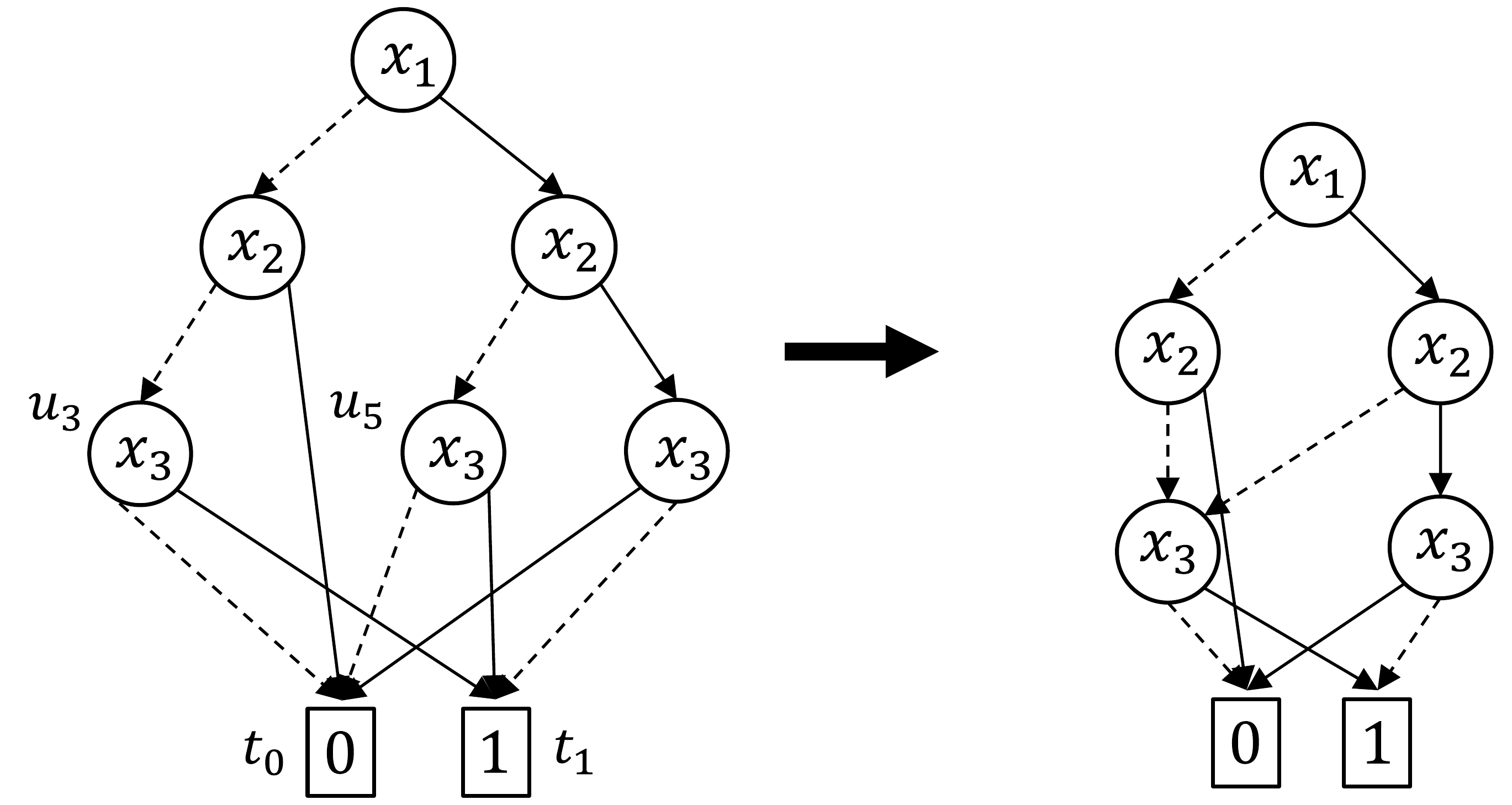}
    \subcaption{}
  \end{minipage}
  \caption{Contraction rules and examples. (a) The redundant node deletion is the deletion of an internal node when two outgoing edges from the internal node have the same child node, where the triangle represents a partial $\mathrm{BDD}_F$ representing a Boolean function $F$ and the top vertex of the triangle denotes the root of the $\mathrm{BDD}_F$. (b) The equivalent node sharing is the merging of two internal nodes when the two internal nodes share both of the child nodes of the two outgoing edges from their respective internal nodes. (c) A $\mathrm{BDD}_f$ representing a Boolean function $f(x_1, x_2, x_3) = \bar{x}_1\bar{x}_2x_3 + x_1\bar{x}_2x_3 + x_1x_2\bar{x}_3$ and the application of the redundant node deletion: The internal node $u_4$ is deleted because the child nodes of the $0$-edge and the $1$-edge are the same $0$-terminal node $t_0$. (d) The application of the equivalent node sharing and the $\mathrm{RBDD}_f$. The internal node $u_5$ is merged to another internal node $u_3$ because both of the child nodes of the two outgoing edges from their respective internal nodes are the $0$-terminal node $t_0$ and the $1$-terminal node $t_1$. } \label{FIG:BDD_rules}
\end{figure}
}

\begin{definition} \label{def:bdd}
A binary decision diagram $\mathrm{BDD}_f = (V,E)$ representing a Boolean function $f(x_1,\ldots, x_n)$ of variables $x_1,\ldots, x_n$ is a rooted directed acyclic graph $G = (V,E)$ with one or two nodes of outdegree zero, where all other nodes are equipped with two labeled outgoing edges and the graph satisfies the following properties.
\begin{itemize}
    \item[1] The nodes of outdegree zero are labeled with $0$ and $1$, respectively, and are called the $0$-terminal node $t_0$ and the $1$-terminal node $t_1$.
    \item[2] The other nodes are labeled with the variables $x_1,\ldots, x_n$, called the internal nodes. There is one internal node of indegree zero, called the root $r$. Let $i(u)$ denote the index of the variable labeling an internal node $u$.
    \item[3] Two outgoing edges of each internal node are labeled with $0$ and $1$, respectively, and are called the $0$-edge and the $1$-edge of the internal node. Let $e_{b}(u)$ and $h_{b}(u)$ denote the $b$-edge of an internal node $u$ and the child node of the $b$-edge. We regard $e_0(u)$ and $e_1(u)$ as labeled edges, so they remain distinct even when they have the same child node. In a general BDD, the two labeled edges may have the same child node, {\it i.e.}, $h_0(u)=h_1(u)$ is allowed.
\end{itemize}
The Boolean contraction rules are the redundant node deletion and the equivalent node sharing. A BDD is called a reduced binary decision diagram $\mathrm{RBDD}_f$ if these contraction rules have been applied exhaustively, equivalently, if the following conditions hold.
\begin{itemize}
    \item[4] The redundant node deletion cannot be applied. For any internal node $u$, two child nodes of the two outgoing edges from $u$ are different, {\it i.e.}, $h_0(u) \neq h_1(u)$.
    \item[5] The equivalent node sharing cannot be applied. Any two internal nodes $u, v$ labeled with the same variable do not share both of the child nodes of two outgoing edges from $u$ and $v$, {\it i.e.}, when $i(u) = i(v)$, $(h_0(u), h_1(u)) \neq (h_0(v), h_1(v))$.
\end{itemize}
\end{definition}

From Ref.~\cite{Breitbar_P1995-dg}, there are several classes of binary decision diagrams (BDDs). 
In this study, to use the results of Ref.~\cite{Breitbar_P1995-dg}, we introduce the following free BDDs and ordered BDDs. (See Fig.~\ref{fig:ex_obdd_fbdd_wfbdd}.)

{\nolinenumbers
\begin{figure}[t]
    \begin{minipage}[b]{0.4\linewidth}
        \centering
        \includegraphics[keepaspectratio, scale=0.17]{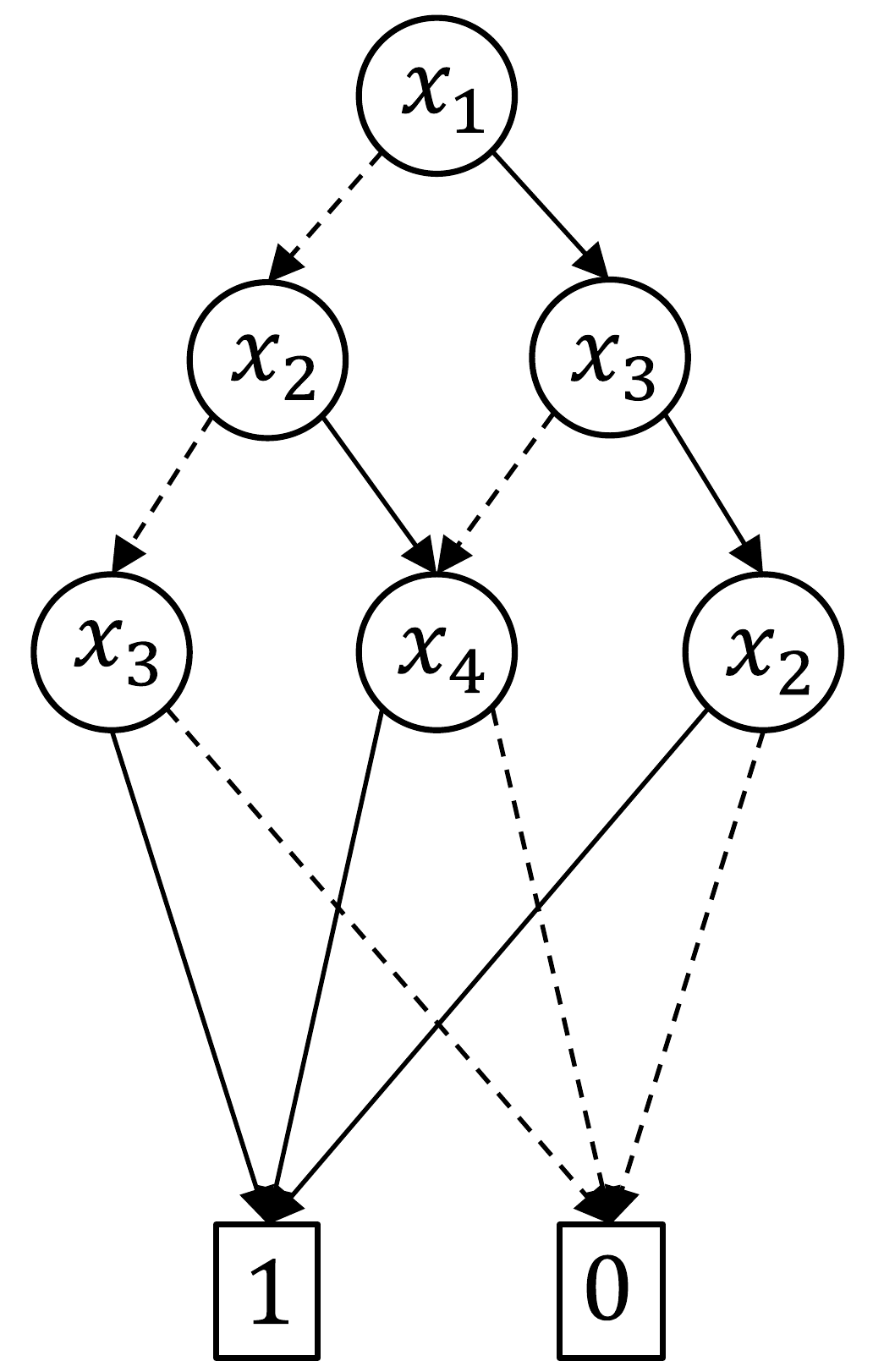}
        \subcaption{}
    \end{minipage}
    \begin{minipage}[b]{0.4\linewidth}
        \centering
        \includegraphics[keepaspectratio, scale=0.17]{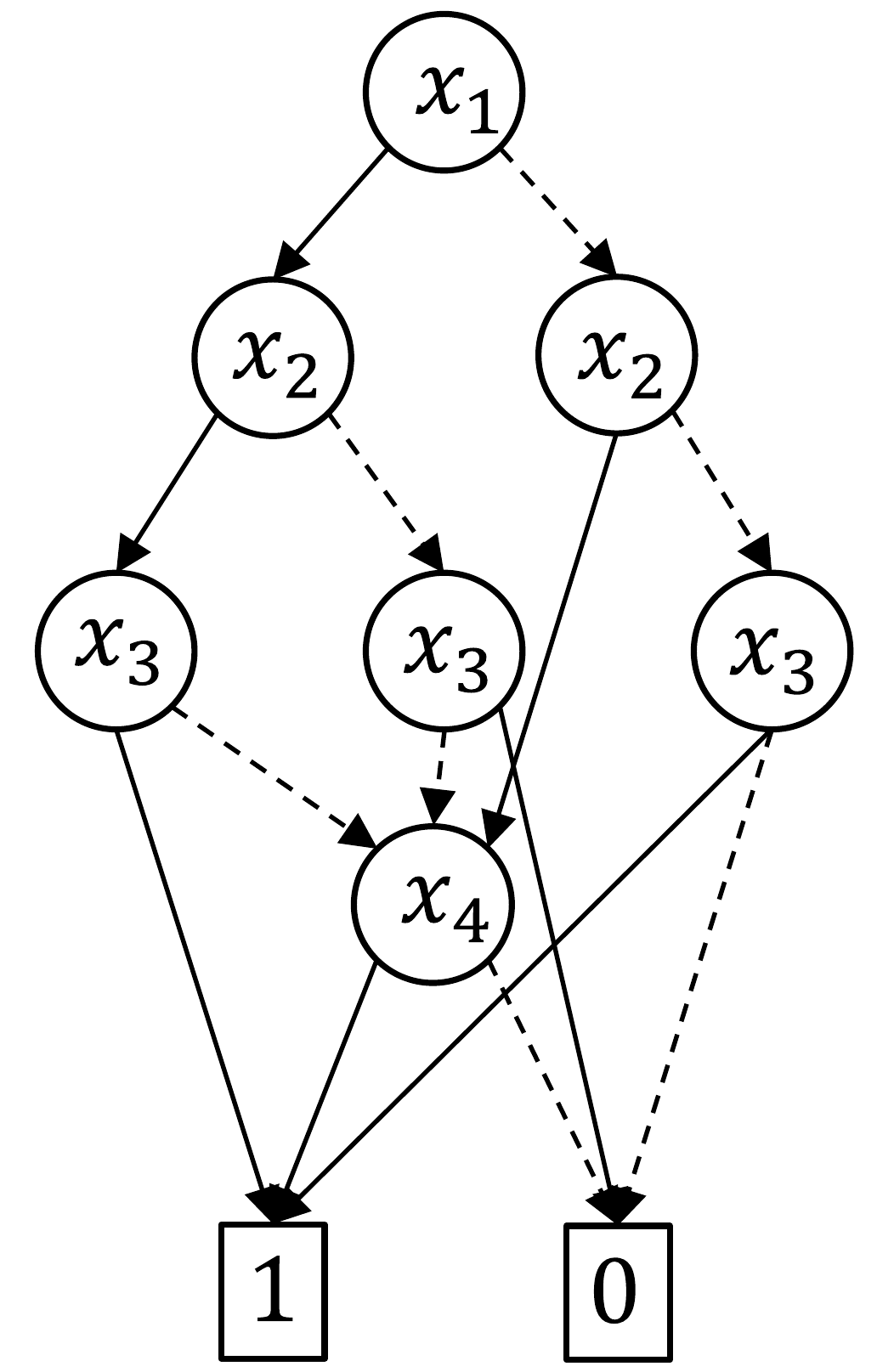}
        \subcaption{}
    \end{minipage}
    \begin{minipage}[b]{0.4\linewidth}
        \centering
        \includegraphics[keepaspectratio, scale=0.17]{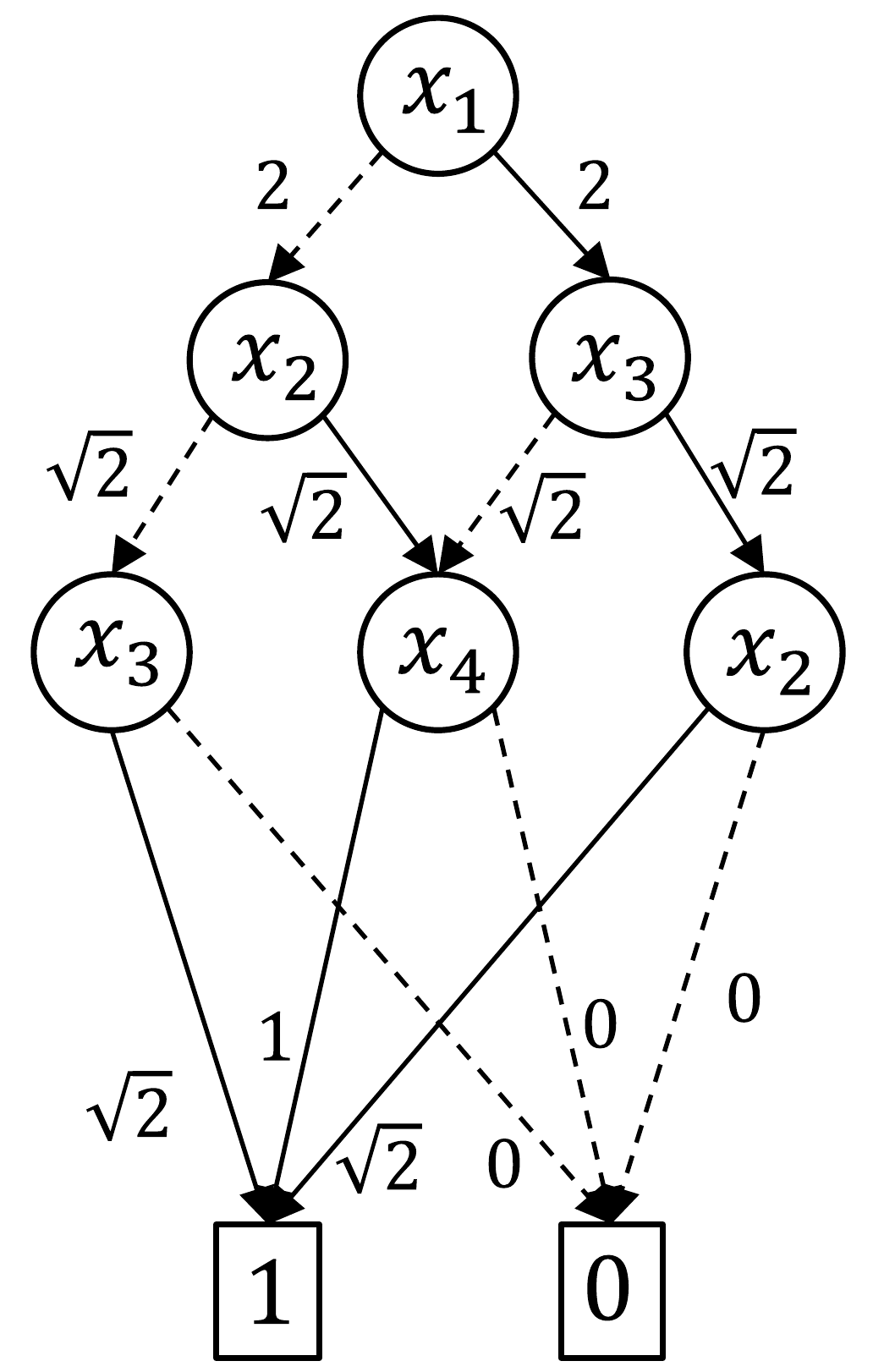}
        \subcaption{}
    \end{minipage}
    \caption{FBDD, OBDD, and WFBDD for $f = \bar{x}_1 \bar{x}_2 x_3 + \bar{x}_1 x_2 x_4 + x_1 \bar{x}_3 x_4 + x_1 x_2 x_3$. (a) In the FBDD, each path from the root to the terminal nodes has at most one occurrence of each variable. (b) In the OBDD, the order of occurrences of the variables on each path from the root to the terminal nodes is consistent. (c) In the WFBDD, directed edges are assigned complex weights. Under the weighted path semantics below, an input reaching $t_0$ has weighted value zero, whereas an input reaching $t_1$ has the product of the selected edge weights.} \label{fig:ex_obdd_fbdd_wfbdd}
\end{figure}
}

\begin{definition} \label{def:fbdd}
A free binary decision diagram $\mathrm{FBDD}_f$ representing a Boolean function $f(x_1,\ldots, x_n)$ of variables $x_1,\ldots, x_n$ is a $\mathrm{BDD}_f$ for which each path from the root to the terminal nodes has at most one occurrence of each variable. If, in addition, the underlying BDD is reduced in the sense of Def.~\ref{def:bdd}, we call it a reduced FBDD.
\end{definition}

\begin{definition}
An ordered binary decision diagram $\mathrm{OBDD}_f$ representing a Boolean function $f(x_1,\ldots, x_n)$ of variables $x_1,\ldots, x_n$ is an $\mathrm{FBDD}_f$ such that the order of occurrences of the variables on each path from the root to the terminal nodes is consistent with some linear order. If, in addition, the underlying BDD is reduced in the sense of Def.~\ref{def:bdd}, we call it a reduced OBDD.
\end{definition}

The FBDD and OBDD have the following Boolean path semantics. For an input $x\in\{0,1\}^n$, starting from the root, we follow the $b$-edge of a node labeled by $x_i$ when $x_i=b$. Since an FBDD queries each variable at most once along each root-to-terminal path, this procedure determines a unique terminal node. The Boolean value is $0$ or $1$ according to whether the reached terminal is $t_0$ or $t_1$, respectively. Variables that are not queried on the path are treated as free variables.

We use WFBDDs here first as classical weighted decision diagrams. A WFBDD augments the Boolean path semantics of an FBDD by complex edge weights, while the quantum-state semantics used later for state preparation is deferred to Sec.~\ref{Sec3}. To establish this data structure more formally, we define the FBDD with weighted edges, called the weighted FBDD. (See Fig.~\ref{fig:ex_obdd_fbdd_wfbdd}.)

\begin{definition} \label{def:wbdd}
A weighted free binary decision diagram $\mathrm{WFBDD}_f=(V,E)$ is an $\mathrm{FBDD}_f$ representing a Boolean function $f(x_1,\ldots,x_n)$ of variables $x_1,\ldots,x_n$ equipped with a weight map $w:E\to\mathbb{C}$. For an internal node $u$, let $w_b(u):=w(e_b(u))$ denote the weight of the $b$-edge of $u$. A WFBDD is not required to be reduced in the Boolean sense; in particular, two labeled outgoing edges may have the same child node while carrying different weights.

\end{definition}

The weight map is a decoration of the underlying FBDD. The root-to-terminal paths are those of the underlying FBDD, and no local normalization condition is imposed in Def.~\ref{def:wbdd}. Boolean contraction rules are not imposed after weights are assigned, because two edges with the same child node may carry different weights and therefore may represent different amplitude factors.
Note that a weighted $\mathrm{OBDD}$ ($\mathrm{WOBDD}$) is obtained analogously by assigning complex edge weights to an OBDD; its Boolean path semantics is inherited from the underlying OBDD, and its weighted path semantics is defined as below.

A WFBDD has the following classical weighted path semantics. For an input $z\in\{0,1\}^n$, let $P_G(z)=(V_z,E_z)$ denote the unique root-to-terminal path obtained by starting from the root and, at each internal node $u$ labeled by $x_{i(u)}$, following the $z_{i(u)}$-edge. When the WFBDD $G$ is clear from the context, we simply write $P(z)$. If this path reaches $t_0$, the weighted value is defined to be zero. If it reaches $t_1$, the weighted value is defined to be the product of the edge weights selected along the path. Thus a WFBDD defines a pseudo-Boolean function $F_G:\{0,1\}^n\to\mathbb{C}$ by
\begin{eqnarray}
F_G(z)
=
\begin{cases}
0, & \text{if } P_G(z) \text{ reaches } t_0,\\
\displaystyle \prod_{u\in V_z\setminus\{t_0,t_1\}} w_{z_{i(u)}}(u), & \text{if } P_G(z) \text{ reaches } t_1,
\end{cases}
\label{eq:weighted_path_semantics}
\end{eqnarray}
This is a classical weighted path semantics of the data structure and is distinct from the quantum-state semantics introduced in Sec.~\ref{Sec3}.

For a WFBDD $G$, let $V_{\mathrm{rel}}(G)$ denote the set of internal nodes lying on at least one root-to-$t_1$ path. An internal node in $V_{\mathrm{rel}}(G)$ is called relevant. For an internal node $u$, the ordered pair $(w_0(u),w_1(u))$ is called its outgoing weight pair. A nonzero outgoing weight pair is called locally normalized if
\begin{eqnarray}
    |w_0(u)|^2+|w_1(u)|^2=1.
    \label{eq:local_normalization_condition}
\end{eqnarray}

\begin{definition}[QSP-admissible WFBDD]
\label{def:qsp_admissible_wfbdd}
A WFBDD $G=\mathrm{WFBDD}_f$ is QSP-admissible if it satisfies the following conditions.
\begin{itemize}
    \item[1] The underlying FBDD has at least one accepting input.
    \item[2] For every $u\in V_{\mathrm{rel}}(G)$, the outgoing weight pair $(w_0(u),w_1(u))$ is locally normalized.
    \item[3] For every $u\in V_{\mathrm{rel}}(G)$ and $b\in\{0,1\}$, if $e_b(u)$ lies on no root-to-$t_1$ path, then $w_b(u)=0$.
\end{itemize}

\end{definition}

Given a WFBDD $G$ in Def.~\ref{def:wbdd} whose relevant internal nodes have nonzero outgoing weight pairs, its locally normalized representative $\widetilde{G}$ is obtained by replacing, for every $u\in V_{\mathrm{rel}}(G)$,
\begin{eqnarray}
    w_b(u)
    \quad \text{with} \quad
    \widetilde{w}_b(u)
    =
    \frac{w_b(u)}{\sqrt{|w_0(u)|^2+|w_1(u)|^2}},
    \qquad b\in\{0,1\}.
    \label{eq:local_normalized_weight}
\end{eqnarray}
The underlying FBDD, its Boolean path semantics, and its classical weighted path support are unchanged. Equation~(\ref{eq:local_normalized_weight}) enforces Condition~2 of Def.~\ref{def:qsp_admissible_wfbdd} for the representative but does not by itself enforce Condition~3. When $\widetilde{G}$ satisfies Def.~\ref{def:qsp_admissible_wfbdd}, the general WFBDD $G$ is used below through this locally normalized representative, which is QSP-admissible.

\section{Quantum state preparation via weighted FBDD} \label{Sec3}

In this section, we show a state-preparation procedure for QSP via WFBDDs.
All quantum circuits in this section are constructed from single-qubit unitaries, controlled single-qubit unitaries, and Toffoli gates.
As explained in Sec.~\ref{Sec2}, a general WFBDD whose locally normalized representative in Eq.~(\ref{eq:local_normalized_weight}) is QSP-admissible is used through that representative. Thus, throughout this section, we write $G=\mathrm{WFBDD}_f$ for a QSP-admissible WFBDD unless otherwise stated. In Sec.~\ref{sec:description}, we define the quantum-state semantics of a QSP-admissible WFBDD. The underlying FBDD determines which inputs are rejected, while the QSP-admissible outgoing weights determine the amplitudes on accepting paths.
Then, we give an example of an $O(\mathrm{poly}(n))$-sized $\mathrm{FBDD}_f$ whose weighted version is interpreted as a quantum state through Def.~\ref{def:q_state_wfbdd}.
In Sec.~\ref{sec:quantum_algorithm}, we show that Algorithm~\ref{alg:WFBDD_SG} prepares the state defined by Def.~\ref{def:q_state_wfbdd} from a QSP-admissible WFBDD.

\subsection{Classical description of quantum states by weighted FBDD}
\label{sec:description}

In this subsection, we define how a QSP-admissible WFBDD is read as a quantum state. The definition does not use the classical weighted path value $F_G$ directly; rather, it uses the QSP-admissible edge weights and assigns zero amplitude to inputs whose selected path reaches $t_0$.
First, we define the quantum-state semantics as follows.

\begin{definition}[Quantum-state semantics of a QSP-admissible WFBDD]
\label{def:q_state_wfbdd}
Let $G=\mathrm{WFBDD}_f$ be a QSP-admissible WFBDD in Def.~\ref{def:qsp_admissible_wfbdd}. For $z \in \{0,1\}^n$, let $P(z)=(V_z,E_z)$ be the unique root-to-terminal path obtained by starting from the root and, at each internal node $u$ labeled by $x_{i(u)}$, following the $z_{i(u)}$-edge until reaching one of the terminal nodes. The quantum state associated with $G$ is defined as
    \begin{eqnarray}
        \ket{\psi} = \sum_{z \in \{ 0,1 \}^n} \alpha(z) \ket{z}, \label{eq:wfbdd_state}
    \end{eqnarray}
    where the amplitude $\alpha(z)$ is given piecewise by
    \begin{eqnarray}
        \alpha(z) =
        \begin{cases}
        0, & \text{if } P(z) \text{ reaches } t_0,\\[2mm]
        \displaystyle
        \frac{1}{\sqrt{2^{n-|V_z|+1}}}
        \prod_{u \in V_z \setminus \{ t_0, t_1 \}}
        w_{z_{i(u)}}(u),
        & \text{if } P(z) \text{ reaches } t_1.
        \end{cases}
        \label{eq:alpha_q_state_piecewise}
    \end{eqnarray}

\end{definition}

If one starts from a general WFBDD in Def.~\ref{def:wbdd}, the associated state is defined by first passing to its locally normalized representative in Eq.~(\ref{eq:local_normalized_weight}), provided that this representative is QSP-admissible. For accepting paths in a QSP-admissible WFBDD, Eq.~(\ref{eq:alpha_q_state_piecewise}) is the product of the QSP-admissible branch weights and the uniform amplitudes associated with variables not queried on the path.

The prefactor $2^{-(n-|V_z|+1)/2}$ is not part of the local edge-weight normalization. Since $|V_z|-1$ is the number of internal nodes and hence the number of variables queried along $P(z)$, the exponent $n-|V_z|+1$ is the number of variables not queried on that path. These unqueried variables remain in the uniform state $\ket{+}$ and contribute a factor $2^{-1/2}$ per such variable. In general, this prefactor cannot be absorbed into a single root weight, because the number of unqueried variables may depend on the selected root-to-terminal path.

The piecewise form in Eq.~(\ref{eq:alpha_q_state_piecewise}) separates the accepting-path product from the rejecting-path convention: if $P(z)$ reaches $t_0$, the amplitude is set to zero before any accepting-path product is evaluated. The accepting-support condition in Def.~\ref{def:qsp_admissible_wfbdd} ensures that this zero-amplitude convention is consistent with the local unitary construction.

\begin{lemma} \label{lem:1}
    Let $G$ be a QSP-admissible WFBDD and let $\alpha(z)$ be defined by Def.~\ref{def:q_state_wfbdd}. Then $|\alpha(z)|^2$ is a probability distribution, {\it i.e.}, $|\alpha(z)|^2 \in [0,1]$ for $z \in \{ 0,1 \}^n$ and $\sum_z |\alpha(z)|^2 = 1$.
\end{lemma}

\begin{proof}[Proof sketch]
    Starting from unit probability mass at the root, Condition~2 of Def.~\ref{def:qsp_admissible_wfbdd} locally preserves probability mass at each relevant internal node. Condition~3 of Def.~\ref{def:qsp_admissible_wfbdd} prevents nonzero mass from flowing into branches outside the accepting support. For a fixed accepting root-to-$t_1$ path, the variables not queried on the path remain in the state $\ket{+}$; summing over all assignments of these free variables cancels the free-variable prefactor in Eq.~(\ref{eq:alpha_q_state_piecewise}) and leaves exactly the squared product of the branch weights along the path. Summing these contributions over all accepting paths gives the unit probability mass initially placed at the root. A detailed normalization argument is given in Appendix~\ref{apdx:proof_lemma1}.
\end{proof}

This semantics is a structured WFBDD-based amplitude model: rejecting inputs are determined by the underlying FBDD, and the accepting amplitudes are determined by the QSP-admissible edge data together with the free-variable prefactor. It is therefore distinct from a model in which arbitrary amplitudes, independent of the diagram structure, are supplied by an external amplitude oracle or by QRAM-style data access.

Following Ref.~\cite{Breitbar_P1995-dg}, we give an example of an $O(\mathrm{poly}(n))$-sized FBDD whose weighted version gives a compact WFBDD description under the quantum-state semantics of Def.~\ref{def:q_state_wfbdd}.
First, let us define the following symmetric Boolean function.
\begin{definition}
Let $X = (x_1, \ldots, x_n)$. The elementary symmetric Boolean function $S^i_n(X)$ is equal to one if and only if exactly $i$ variables in $X$ are equal to one, where $0 \le i \le n$.
\end{definition}

Second, we define the following function (see Fig.~\ref{fig:sym_function_and_G}),
\begin{eqnarray}
f^1 (X, Y, Z) &=& \sum_{i=1}^n x_i (S^{i-1}_{2n}(Y,Z) + S^{n+i}_{2n}(Y,Z)), \\
f^2 (X, Y, Z) &=& f^1 (Y, Z, X), \\
f^3 (X, Y, Z) &=& f^1 (Z, X, Y).
\end{eqnarray}
Then, we define the following Boolean function $h$ (see Fig.~\ref{fig:sym_function_and_G}),
\begin{eqnarray}
h(X,Y,Z) = \bar{v}\bar{w} f^1(X,Y,Z) + \bar{v} w f^2(X,Y,Z) + v\bar{w} f^3(X,Y,Z). \label{eq:dense_h}
\end{eqnarray}

The following result (Theorem 7 of Ref.~\cite{Breitbar_P1995-dg}) is known. For every $n \ge 4$, the $\mathrm{FBDD}_h$ is computable with no more than $3(2n^2+2n)+3$ nodes; however, every OBDD computing $h$ needs at least $2^{n/3}$ nodes. Let $m=3n+2$ be the total number of input variables of $h$. If $h$ has $d$ satisfying assignments, then one can construct an OBDD of size $O(md)$ by listing the satisfying assignments. Since every OBDD computing $h$ needs at least $2^{n/3}$ nodes, we have
\[
    d \ge 2^{n/3}/O(m)=2^{\Omega(n)}=2^{\Omega(m)}.
\]
Therefore, after equipping this FBDD with edge weights so as to obtain a QSP-admissible WFBDD and interpreting it through Def.~\ref{def:q_state_wfbdd}, one obtains a compact WFBDD description of an $m$-qubit quantum state with $d=2^{\Omega(m)}$ nonzero coefficients.

{\nolinenumbers
\begin{figure}[t]
    \begin{minipage}[b]{0.45\linewidth}
        \centering
        \includegraphics[keepaspectratio, scale=0.11]{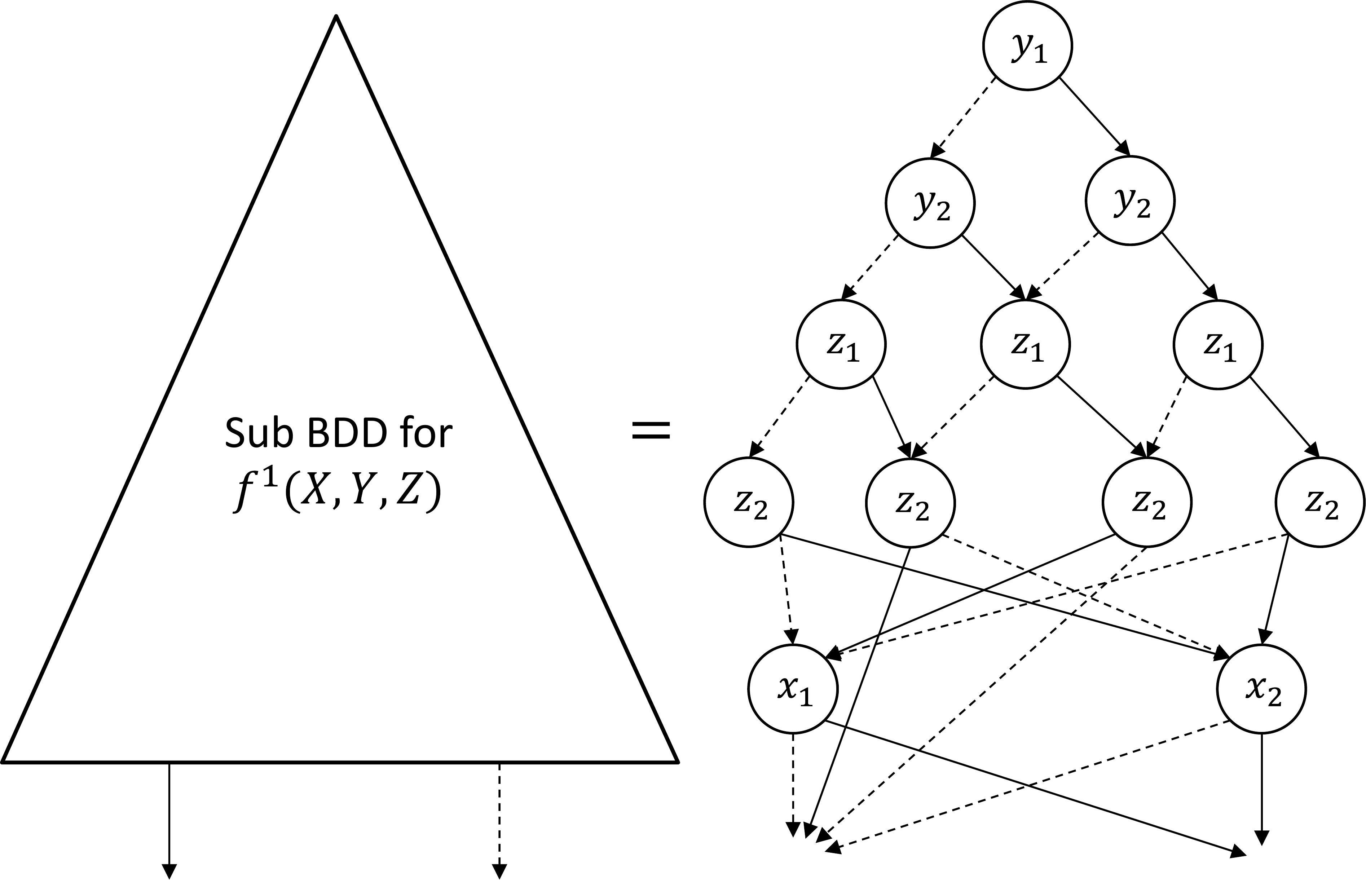}
        \subcaption{}
    \end{minipage}
    \begin{minipage}[b]{0.45\linewidth}
        \centering
        \includegraphics[keepaspectratio, scale=0.15]{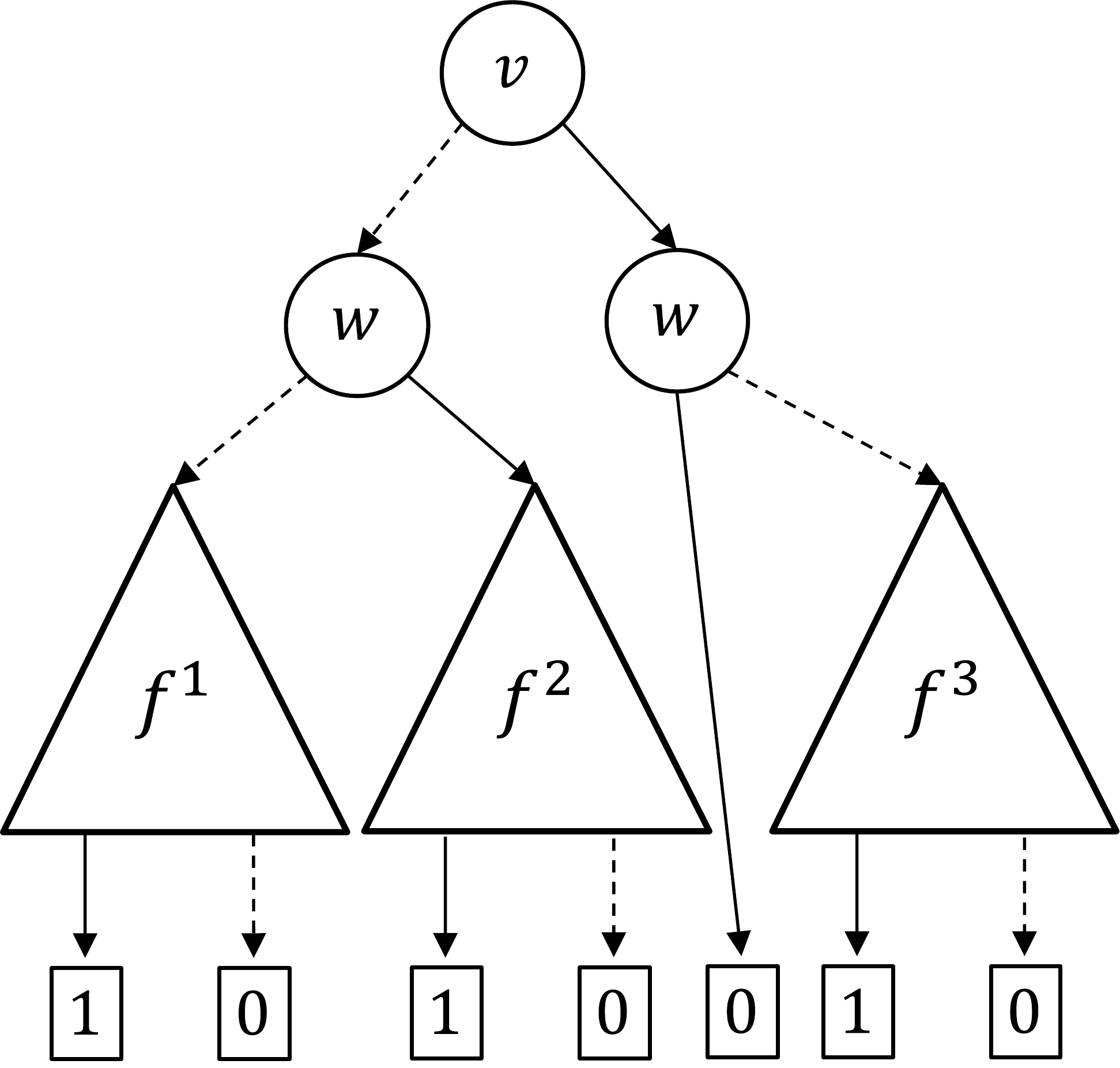}
        \subcaption{}
    \end{minipage} 
    \caption{(a) A $6$-variable example of subBDD for $f^1(X,Y,Z) = x_1(S_4^0(Y,Z)+S_4^3(Y,Z)) + x_2(S_4^1(Y,Z)+S_4^4(Y,Z))$, where $X=(x_1, x_2), Y=(y_1, y_2), Z=(z_1,z_2)$. The function $S_n^i(x_1,\ldots, x_n)$ is called the elementary symmetric function; it is equal to one if and only if exactly $i$ of its variables are equal to one, where $0 \le i \le n$. (b) A $\mathrm{FBDD}_h$ for $h(X,Y,Z) = \bar{v}\bar{w} f^1(X,Y,Z) + \bar{v} w f^2(X,Y,Z) + v\bar{w} f^3(X,Y,Z)$.} \label{fig:sym_function_and_G}
\end{figure}
}

\subsection{\texorpdfstring{State-preparation circuit from weighted FBDD}{State-preparation circuit from weighted FBDD}}
\label{sec:quantum_algorithm}

Now, we show that we can construct an efficient quantum circuit generating the quantum state defined from a given QSP-admissible $\mathrm{WFBDD}_f$ by Def.~\ref{def:q_state_wfbdd} (see an example in Fig.~\ref{example:alg:SG}).
In this section, the WFBDD is treated as a supplied classical description of the target state. The gate count below counts the generated state-preparation circuit and does not include the cost of constructing the WFBDD description or any coherent data-access mechanism.
If the supplied data are a general WFBDD in Def.~\ref{def:wbdd} and if its locally normalized representative in Eq.~(\ref{eq:local_normalized_weight}) is QSP-admissible, then Algorithm~\ref{alg:WFBDD_SG} is applied to that representative. Since this representative has the same underlying FBDD, the node count and the asymptotic gate count are unchanged.

\begin{figure}[!t]
\begin{algorithm}[H]
{\nolinenumbers
\caption{StateGeneration} \label{alg:WFBDD_SG}
}
\begin{algorithmic}[1]
\Require A QSP-admissible $\mathrm{WFBDD}_f = (V,E)$ in Def.~\ref{def:qsp_admissible_wfbdd}. Here $i(u)$, $e_b(u)$, $h_b(u)$, and $w_b(u)$ denote the index of the variable assigned to an internal node $u$, the $b$-edge out from $u$, the child node of $e_b(u)$, and the weight assigned to $e_b(u)$, respectively.
\Ensure The quantum state $\sum_{z \in \{ 0,1 \}^n} \alpha(z) \ket{z}$ associated with $\mathrm{WFBDD}_f$ by Def.~\ref{def:q_state_wfbdd}. 
\State {\bf [Initialization]} \label{alg:SG1}
\State Prepare a set $Q = \{ r \}$ and an empty stack $S$. \label{alg:SG2}
\State Assign qubits initialized to $\ket{+}$ to the variables $x_1, \ldots, x_n$. \label{alg:SG3}
\State Assign a qubit initialized to $\ket{1}$ to the root $r$. \label{alg:SG4}
\State Assign qubits initialized to $\ket{0}$ to the internal nodes except the root $r$. \label{alg:SG5}
\State {\bf [Main loop]} \label{alg:SG6}
\While{the set $Q$ is not empty} \label{alg:SG7}
    \State Remove a node $u$ from $Q$. \label{alg:SG8}
    \State Push the node $u$ to the stack $S$. \label{alg:SG9}
    \State Apply the controlled-$H$ gate controlled by the qubit $u$ to the qubit $x_{i(u)}$. \label{alg:SG10}
    \State Apply the controlled-$U(u)$ gate controlled by the qubit $u$ to the qubit $x_{i(u)}$. If $u$ is relevant, use
    $U(u) =
    \begin{pmatrix}
        w_0(u) & -w^{\ast}_1(u) \\
        w_1(u) & w^{\ast}_0(u) \\
    \end{pmatrix}$,
    whose first column is $(w_0(u),w_1(u))^T$; if $u$ is irrelevant, choose any fixed single-qubit unitary, e.g., the identity. \label{alg:SG11}
    \ForEach{$b$-edge $e_b(u)$ out from the node $u$ and its child node $h_b(u)$} \label{alg:SG12}
        \State Remove the $b$-edge $e_b(u)$ from $E$. \label{alg:SG13}
        \If{the child node $h_b(u)$ has no incoming edge and is not a terminal node} \label{alg:SG14}
            \State Add the child node $h_b(u)$ to the set $Q$. \label{alg:SG15}
        \EndIf \label{alg:SG16}
        \If{$h_b(u) \neq t_0, t_1$} \label{alg:SG17}
            \State Apply the $X$ gate to the qubit $x_{i(u)}$ when $b = 0$. \label{alg:SG18}
            \State Apply the Toffoli gate controlled by the qubits $u$ and $x_{i(u)}$ to the qubit $h_b(u)$. \label{alg:SG19}
            \State Apply the $X$ gate to the register assigned to $x_{i(u)}$ when $b = 0$. \label{alg:SG20}
        \EndIf \label{alg:SG21}
    \EndFor \label{alg:SG22}
\EndWhile \label{alg:SG23}
\State {\bf [Undo]} \label{alg:SG24}
\While{the stack $S$ is not empty} \label{alg:SG25}
    \State Pop a node $u$ from the stack $S$. \label{alg:SG26}
    \State For each original $b$-edge $e_b(u)$ out from $u$, repeat Lines~\ref{alg:SG17}--\ref{alg:SG21}.  \label{alg:SG27}
\EndWhile \label{alg:SG28}
\end{algorithmic}
\end{algorithm}
\end{figure}

For every relevant internal node of a QSP-admissible WFBDD, $|w_0(u)|^2+|w_1(u)|^2=1$ and hence the matrix $U(u)$ used in Line~\ref{alg:SG11} of Algorithm~\ref{alg:WFBDD_SG} is unitary. If one starts from raw weights of a general WFBDD, the weights used in that line are those of the locally normalized representative in Eq.~(\ref{eq:local_normalized_weight}), provided that this representative is QSP-admissible. Irrelevant internal nodes do not enter any accepting-path product in Def.~\ref{def:q_state_wfbdd}, so the choice of $U(u)$ at such nodes does not affect the prepared state.

\begin{theorem} \label{thm:WS}
    Let $\mathrm{WFBDD}_f=(V,E)$ be a QSP-admissible WFBDD. Then Algorithm~\ref{alg:WFBDD_SG} constructs a quantum circuit preparing the quantum state associated with $\mathrm{WFBDD}_f$ by Def.~\ref{def:q_state_wfbdd}, with $O(|V|)$ uses of single- or two-qubit gates and $|V|-2$ ancillary qubits. The same statement applies to a general WFBDD whose locally normalized representative is QSP-admissible, by applying the theorem to that representative.
\end{theorem}

{\nolinenumbers
\begin{figure}[t]
    \begin{minipage}[b]{0.4\linewidth}
        \centering
        \includegraphics[keepaspectratio, scale=0.07]{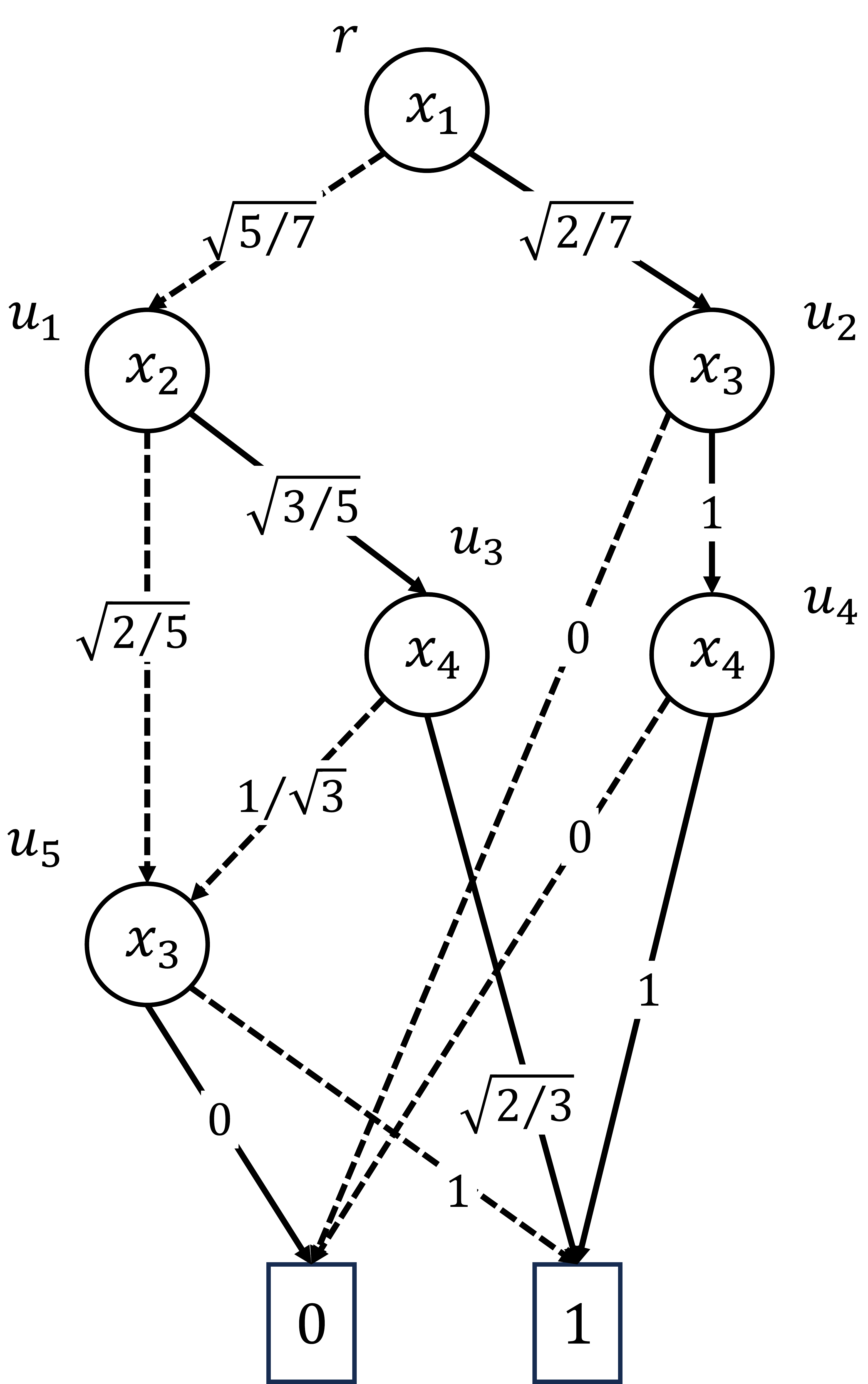}
        \subcaption{}
    \end{minipage}
    \begin{minipage}[b]{0.5\linewidth}
        \centering
        \includegraphics[keepaspectratio, scale=0.15]{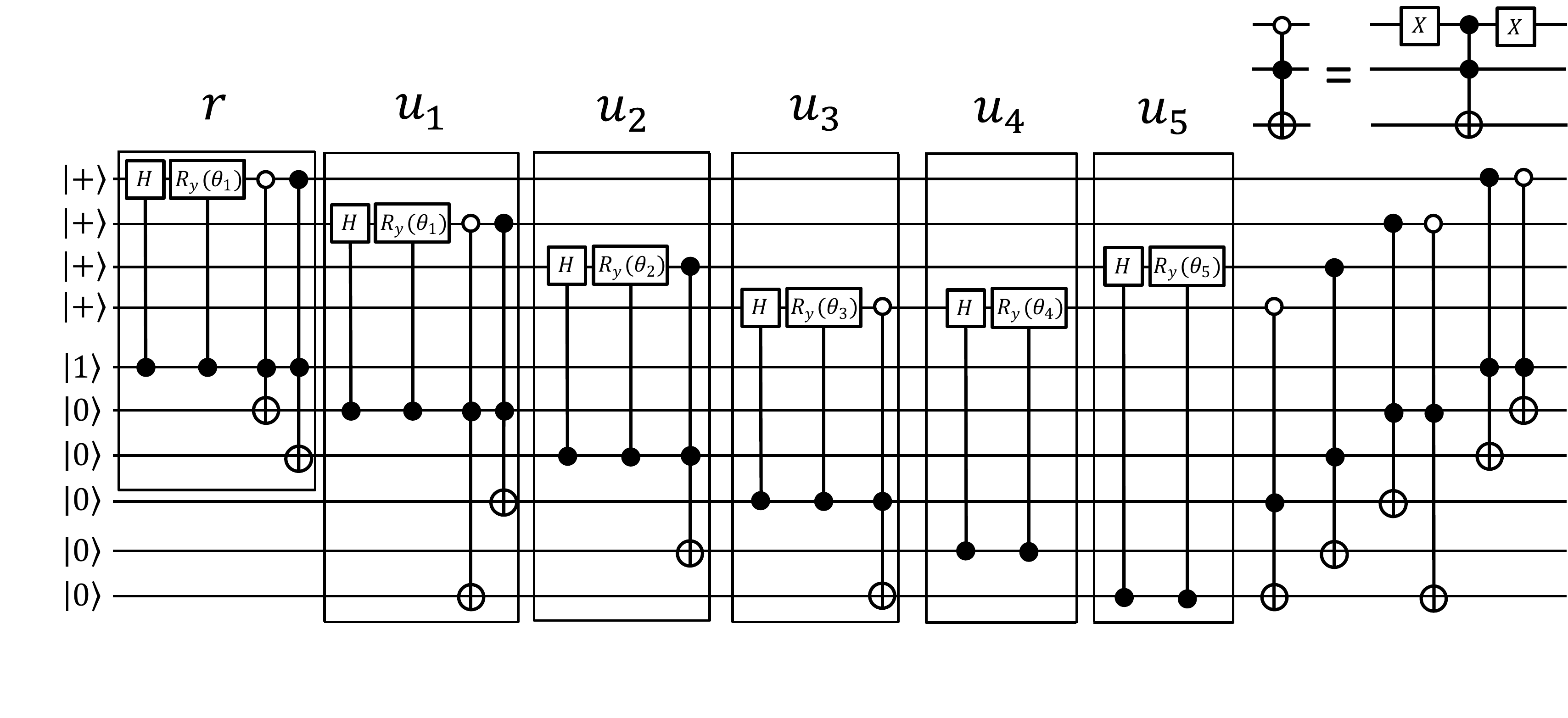}
        \subcaption{}
    \end{minipage}
    \caption{An example of Algorithm~\ref{alg:WFBDD_SG}. (a) An example of $\textrm{WFBDD}_f$. In Algorithm~\ref{alg:WFBDD_SG}, all the internal nodes are pushed to a stack $S$ in a topological ordering such that for every directed edge $(u, v)$ out from $u$ to $v$, $u$ comes before $v$, {\it i.e.}, $r \prec u_1 \prec \cdots \prec u_5$. Start from the root $r$ and push $r$ to $S$. Then, remove the $0$-edge $(r, u_1)$ and the $1$-edge $(r, u_2)$, at which, if the indegree of $u_{1(2)}$ is zero, $u_{1(2)}$ is added to a set $Q$. The added nodes are pushed to $S$ in the following rounds and are ordered. Repeating this process, we push all the internal nodes to $S$ and order them. (b) The quantum circuit generated from the $\textrm{WFBDD}_f$. Initially, assign qubits to the variables $x_1,\ldots, x_4$ and the internal nodes $r, u_1, \ldots, u_5$. Each qubit $x_i$ is set to $\ket{+}$, the qubit $r$ is set to $\ket{1}$, and each qubit $u_j$ is set to $\ket{0}$. Then, the controlled-$H$ gate and the controlled-$U(r)$ gate (as in Line~\ref{alg:SG11}) controlled by the qubit $r$ are applied to the qubit $x_1$. Furthermore, the Toffoli gate controlled by the qubits $r$ and $x_1$ is applied to each qubit $u_{1(2)}$ as shown in the figure. Preparing the unitaries for the rest of the internal nodes in the topological order, we derive the quantum circuit in the figure.} \label{example:alg:SG}
\end{figure}
}

\begin{proof}[Proof sketch]
    Algorithm~\ref{alg:WFBDD_SG} first processes the internal nodes in a topological order obtained by the queue-and-edge-removal procedure. In this procedure, a child node is added to the queue only after all incoming edges from unprocessed internal nodes have been removed. Hence every parent is pushed to the stack before its internal children, and, because the WFBDD is a finite directed acyclic graph, the procedure eventually orders all internal nodes topologically. Along this order, one proves by induction that the variable register accumulates the branch-weight product and the free-variable prefactor of Def.~\ref{def:q_state_wfbdd}, while the node registers record the internal nodes activated along the selected partial paths. The controlled operation in Line~\ref{alg:SG11} of Algorithm~\ref{alg:WFBDD_SG} inserts the QSP-admissible branch pair $(w_0(u),w_1(u))^T$, and the Toffoli gates propagate the active-path information to the child-node registers. By the accepting-support condition, branches outside all root-to-$t_1$ paths have zero amplitude. The undo stage applies the child-recording Toffoli gates in reverse topological order, returning the node registers to their initial product state, namely $\ket{1}$ on the root register and $\ket{0}$ on the other internal-node registers, without changing the variable-register amplitudes. Thus the remaining variable register is the state of Def.~\ref{def:q_state_wfbdd}. The number of controlled one-qubit operations and Toffoli gates is linear in the number of nodes, and a constant-size Toffoli decomposition gives the stated single- and two-qubit gate count. The detailed proof is given in Appendix~\ref{apdx:proof_theorem1}.
\end{proof}

\section{Quantum state preparation for uniform superpositions} \label{Sec4}

In the previous section, we defined the quantum-state semantics of a WFBDD through its locally normalized representative and gave Algorithm~\ref{alg:WFBDD_SG} for preparing the resulting state. In this section, we specialize this framework to the uniform superposition over the accepting assignments of a Boolean function.

Let $f:\{0,1\}^n\to\{0,1\}$ be represented by a supplied $\mathrm{FBDD}_f=(V,E)$. We assume throughout this section that $|f|>0$, because otherwise the uniform state over satisfying assignments is not defined. The target state is
\begin{eqnarray}
    \ket{\phi_f}
    =
    \frac{1}{\sqrt{|f|}}
    \sum_{z:f(z)=1}\ket{z}
    =
    \frac{1}{\sqrt{|f|}}
    \sum_{z\in\{0,1\}^n} f(z)\ket{z}.
    \label{eq:f_encoding}
\end{eqnarray}
Our goal is to assign edge weights to the given FBDD so that the resulting WFBDD is QSP-admissible in Def.~\ref{def:qsp_admissible_wfbdd} and its quantum-state semantics in Def.~\ref{def:q_state_wfbdd} is exactly $\ket{\phi_f}$. The cost below is the classical preprocessing cost for the supplied diagram; it does not include the cost of finding or constructing the FBDD representation of $f$.

We now define the weights using effective branch model counts. Let
\begin{eqnarray}
    X:=\{x_1,\ldots,x_n\}.
\end{eqnarray}
For every node $u$, the recursive construction associates a remaining-variable set $R(u)\subseteq X$. Intuitively, $R(u)$ is the set of variables that remain as common free variables below the node $u$ after branch-dependent skipped variables have been factored out. The terminal convention is
\begin{eqnarray}
    R(t_0)=R(t_1)=X,
    \qquad
    M(t_0)=0,
    \qquad
    M(t_1)=1.
    \label{eq:terminal_remaining_count}
\end{eqnarray}
Here $M(t_1)=1$ counts one accepting terminal pattern. Multiplicities caused by variables skipped on incoming edges are accounted for in the branch counts below.

For an internal node $u$, let $h_b(u)$ be the child reached by the $b$-edge, $b\in\{0,1\}$. After the quantities for the two children have been computed, define the common remaining-variable set
\begin{eqnarray}
    Y(u)=R(h_0(u))\cap R(h_1(u)),
    \label{eq:common_remaining_set}
\end{eqnarray}
and define
\begin{eqnarray}
    R(u)=Y(u)\setminus\{x_{i(u)}\}.
    \label{eq:remaining_variable_set}
\end{eqnarray}
Thus $Y(u)$ is the part of the remaining-variable sets that is common to the two children of $u$, and $R(u)$ is obtained after the queried variable at $u$ is removed from this common set. The set difference $R(h_b(u))\setminus Y(u)$ records the variables that are skipped along the branch from $u$ to $h_b(u)$ relative to the common variables kept for the two branches. Therefore the effective branch model count is
\begin{eqnarray}
    M_b(u)
    =
    2^{|R(h_b(u))\setminus Y(u)|}M(h_b(u)),
    \qquad b\in\{0,1\},
    \label{eq:effective_branch_count}
\end{eqnarray}
which is generally not just the child count $M(h_b(u))$. In other words, $M(u)$ is an effective model count for the subdiagram rooted at $u$ after the variables in $R(u)$ have been factored out, and $M_b(u)$ is the corresponding effective count for the $b$-branch, including the multiplicity from variables skipped along that branch. The effective model count at $u$ is then
\begin{eqnarray}
    M(u)=M_0(u)+M_1(u).
    \label{eq:node_model_count}
\end{eqnarray}

For every relevant internal node $u$ with $M(u)>0$, assign
\begin{eqnarray}
    w_b(u)=\sqrt{\frac{M_b(u)}{M(u)}},
    \qquad b\in\{0,1\}.
    \label{eq:uniform_branch_weight}
\end{eqnarray}
Then
\begin{eqnarray}
    |w_0(u)|^2+|w_1(u)|^2
    =
    \frac{M_0(u)+M_1(u)}{M(u)}
    =1.
    \label{eq:uniform_local_normalization}
\end{eqnarray}
Thus every relevant internal node has a locally normalized outgoing weight pair. 
Moreover, if the branch $u\to h_b(u)$ has no accepting continuation, then $M_b(u)=0$ and hence Eq.~(\ref{eq:uniform_branch_weight}) gives $w_b(u)=0$. 
Therefore the constructed WFBDD satisfies the local-normalization and accepting-support conditions in Def.~\ref{def:qsp_admissible_wfbdd}. Since $|f|>0$, the underlying FBDD has at least one accepting input and hence the constructed WFBDD is QSP-admissible. 
Consequently, Def.~\ref{def:q_state_wfbdd} assigns a quantum-state semantics to the constructed WFBDD. 
In the present construction, no further normalization of the edge weights is needed: the locally normalized representative is the constructed WFBDD itself. 
Hence the weights in Eq.~(\ref{eq:uniform_branch_weight}) are exactly the weights used in the amplitude products of Def.~\ref{def:q_state_wfbdd}. 
Nodes with $M(u)=0$ have no accepting continuation, so they are outside the relevant-node set and do not enter any accepting-path product.

Algorithm~\ref{alg:UWD} is an explicit bottom-up classical preprocessing procedure implementing the recursive definitions of $R(u)$, $Y(u)$, $M_b(u)$, $M(u)$, and $w_b(u)$. It is not a quantum circuit; it produces a QSP-admissible WFBDD, which is already locally normalized on all relevant internal nodes, and this WFBDD is subsequently interpreted through Def.~\ref{def:q_state_wfbdd} and prepared by Algorithm~\ref{alg:WFBDD_SG}.

\begin{figure}[!t]
\begin{algorithm}[H]
{\nolinenumbers
\caption{UniformWeightDistribution}\label{alg:UWD}
}
\begin{algorithmic}[1]
\Require A supplied $\mathrm{FBDD}_f=(V,E)$ representing $f:\{0,1\}^n\to\{0,1\}$ with $|f|>0$. For an internal node $u$, let $i(u)$ be the queried-variable index and let $h_b(u)$ be the child reached by the $b$-edge.
\Ensure A QSP-admissible $\mathrm{WFBDD}_f$ whose quantum-state semantics is $\ket{\phi_f}$ in Eq.~(\ref{eq:f_encoding}).
\State $X:=\{x_1,\ldots,x_n\}$.
\State $R(t_0):=X$, $R(t_1):=X$.
\State $M(t_0):=0$, $M(t_1):=1$.
\State Compute a reverse topological order of the internal nodes, so that both children of a node are processed before the node. \label{alg:UWD:line4}
\ForEach{internal node $u$ in this order}
    \State $Y(u):=R(h_0(u))\cap R(h_1(u))$.
    \State $R(u):=Y(u)\setminus\{x_{i(u)}\}$.
    \For{$b\in\{0,1\}$}
        \State $M_b(u):=2^{|R(h_b(u))\setminus Y(u)|}M(h_b(u))$.
    \EndFor
    \State $M(u):=M_0(u)+M_1(u)$.
    \If{$M(u)>0$}
        \For{$b\in\{0,1\}$}
            \State $w_b(u):=\sqrt{M_b(u)/M(u)}$.
        \EndFor
    \Else
        \State $w_0(u):=w_1(u):=0$. \Comment{$u$ has no accepting continuation.}
    \EndIf
\EndFor
\State Return the WFBDD obtained by assigning the weights $w_b(u)$ to the corresponding edges of $\mathrm{FBDD}_f$.
\end{algorithmic}
\end{algorithm}
\end{figure}

Since an FBDD is a directed acyclic graph with two outgoing edges per internal node, the recursive construction can be implemented with $O(|V|)$ node/edge queries to the supplied diagram. This query count should be distinguished from ordinary classical preprocessing time. If remaining-variable sets are stored explicitly, the intersections and set differences in Eqs.~(\ref{eq:common_remaining_set})--(\ref{eq:effective_branch_count}) may require $O(n|V|)$ classical time. (See Fig.~\ref{example:alg:UWD}.)

There is also an alternative normalization convention in which one uses unnormalized model-count weights, such as weights proportional to $\sqrt{M_b(u)}$, and absorbs the resulting global normalization into an additional root edge. We do not use this convention here. In the locally normalized convention used in this paper, Eq.~(\ref{eq:uniform_branch_weight}) makes every relevant outgoing pair normalized, and the free-variable prefactor in Def.~\ref{def:q_state_wfbdd} accounts for variables not queried on the selected FBDD path. Therefore no additional root edge is needed in our construction.

{\nolinenumbers
\begin{figure}[t]
\centering
\begin{minipage}[b]{0.30\linewidth}
\centering
\includegraphics[keepaspectratio, scale=0.06]{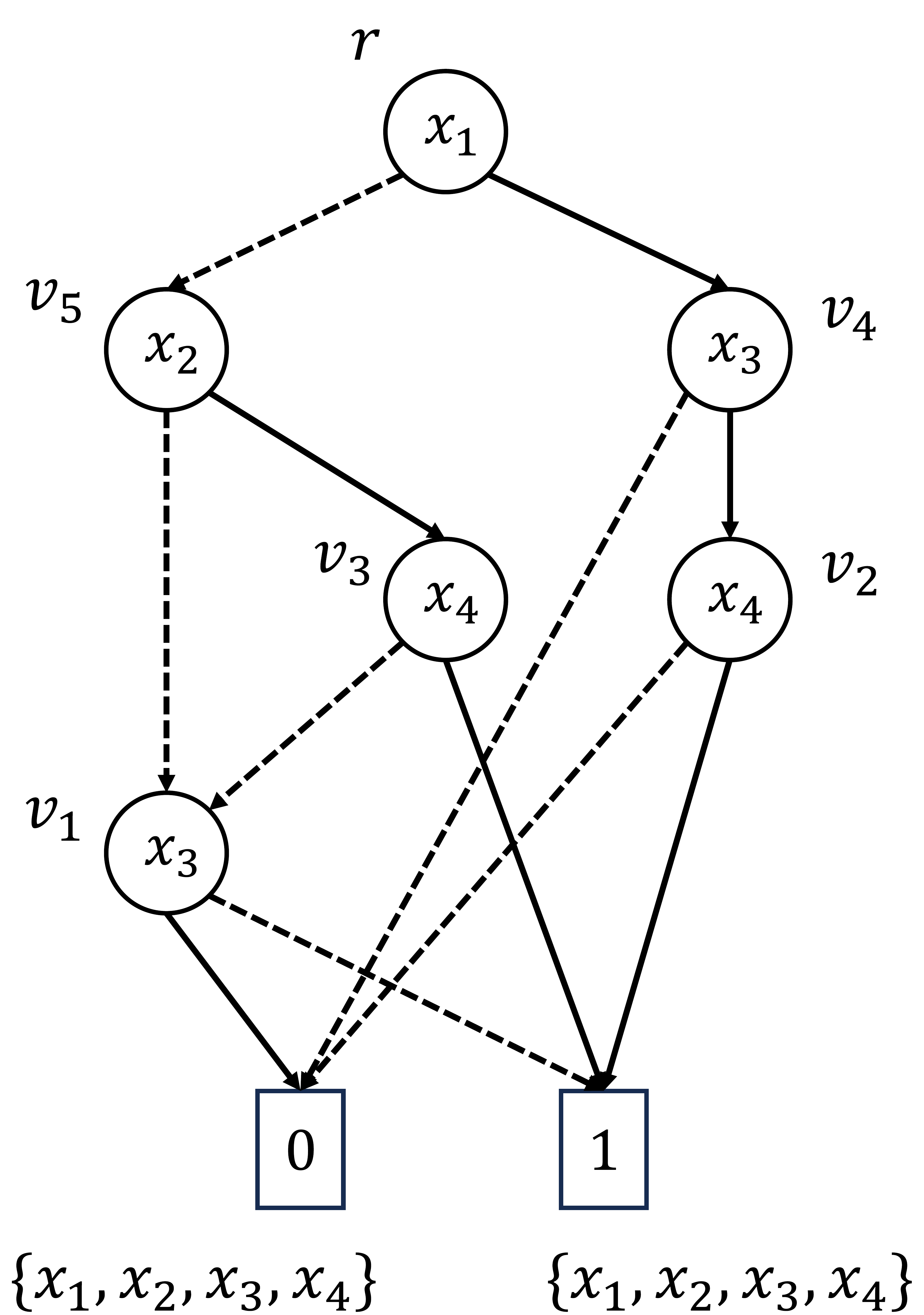}
\subcaption{}
\end{minipage}
\hfill
\begin{minipage}[b]{0.30\linewidth}
\centering
\includegraphics[keepaspectratio, scale=0.06]{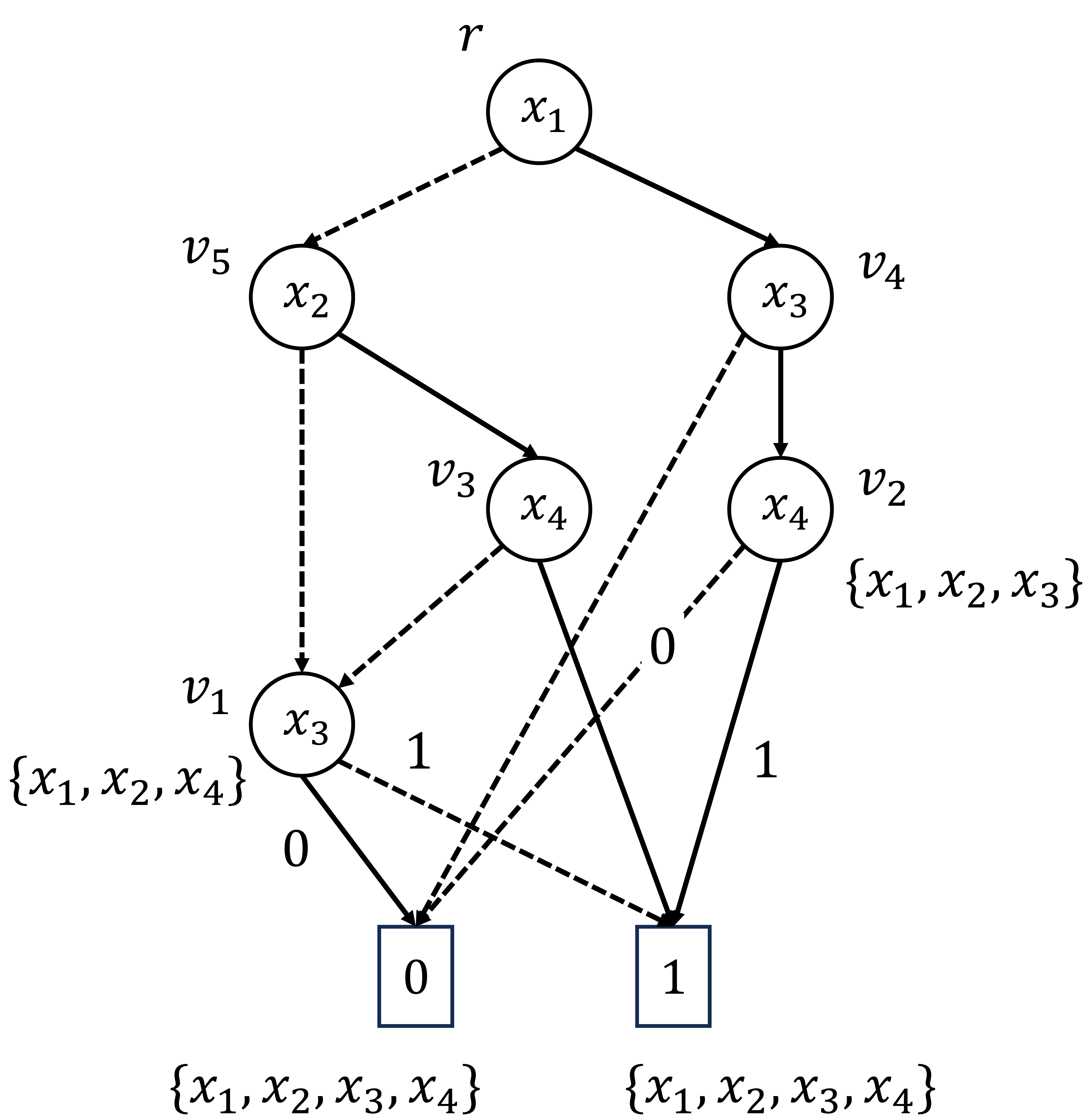}
\subcaption{}
\end{minipage}
\hfill
\begin{minipage}[b]{0.30\linewidth}
\centering
\includegraphics[keepaspectratio, scale=0.06]{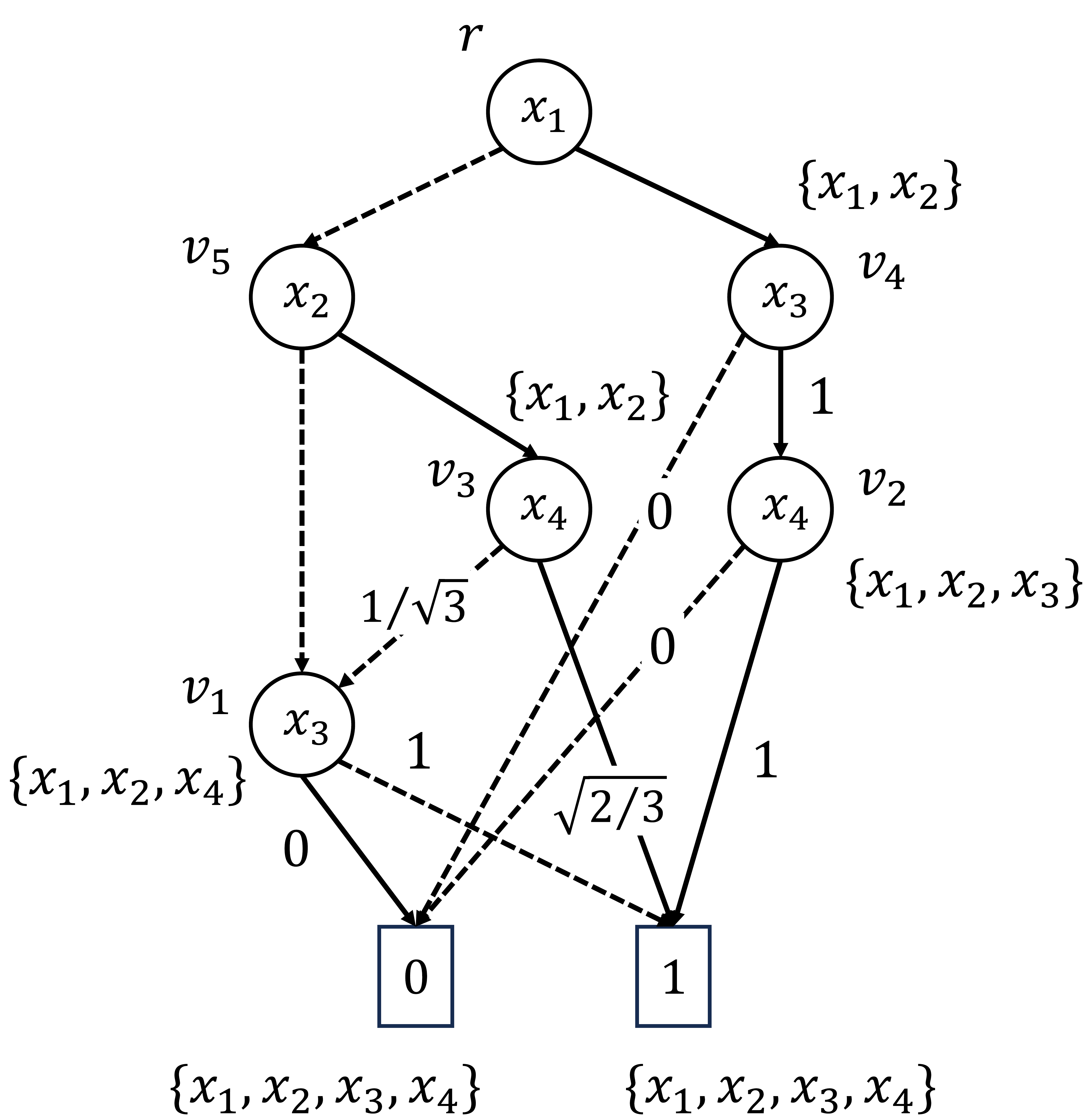}
\subcaption{}
\end{minipage}

\vspace{1mm}

\begin{minipage}[b]{0.30\linewidth}
    \centering
    \includegraphics[keepaspectratio, scale=0.06]{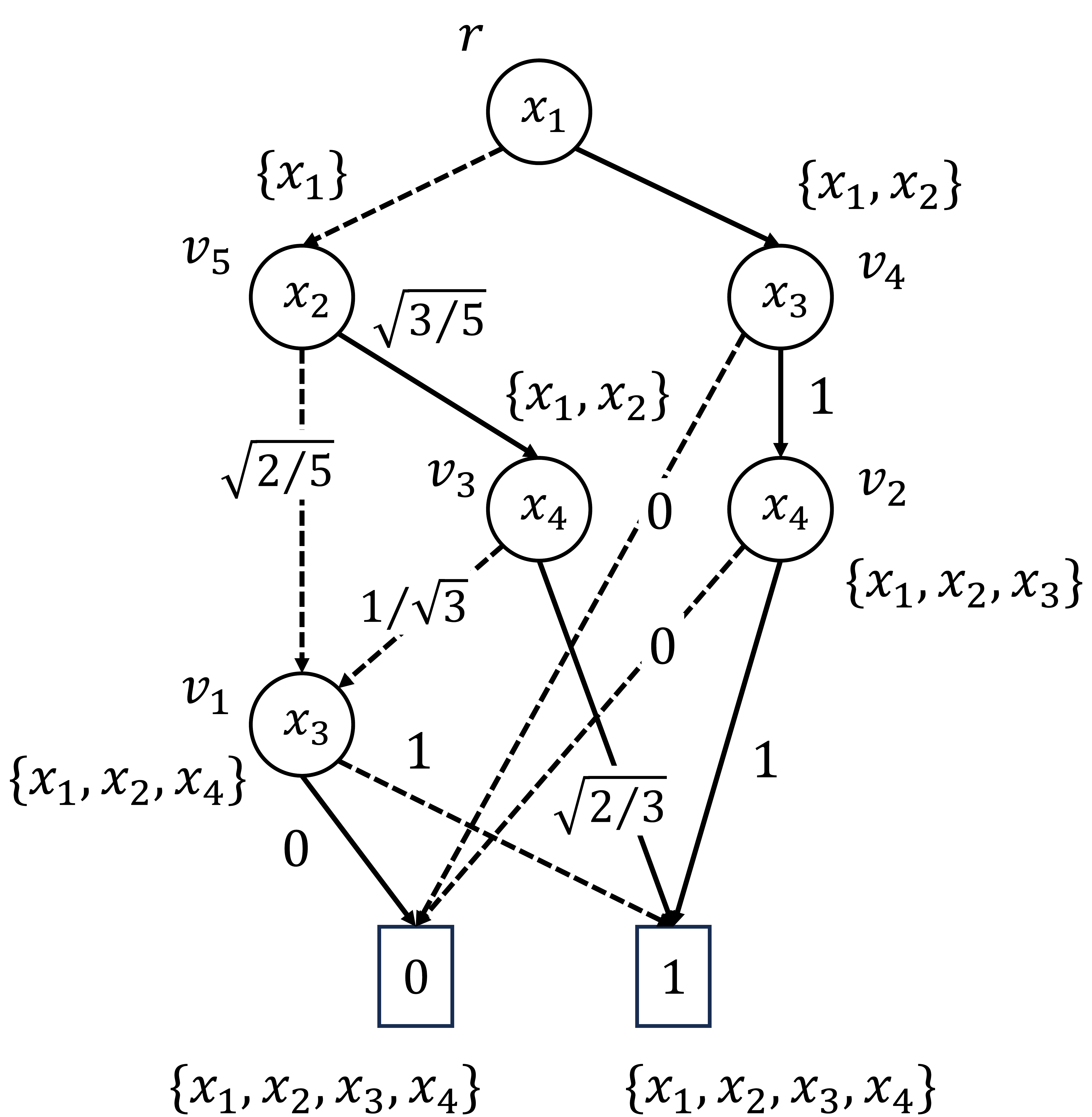}
    \subcaption{}
\end{minipage}
\hspace{0.08\linewidth}
\begin{minipage}[b]{0.30\linewidth}
    \centering
    \includegraphics[keepaspectratio, scale=0.06]{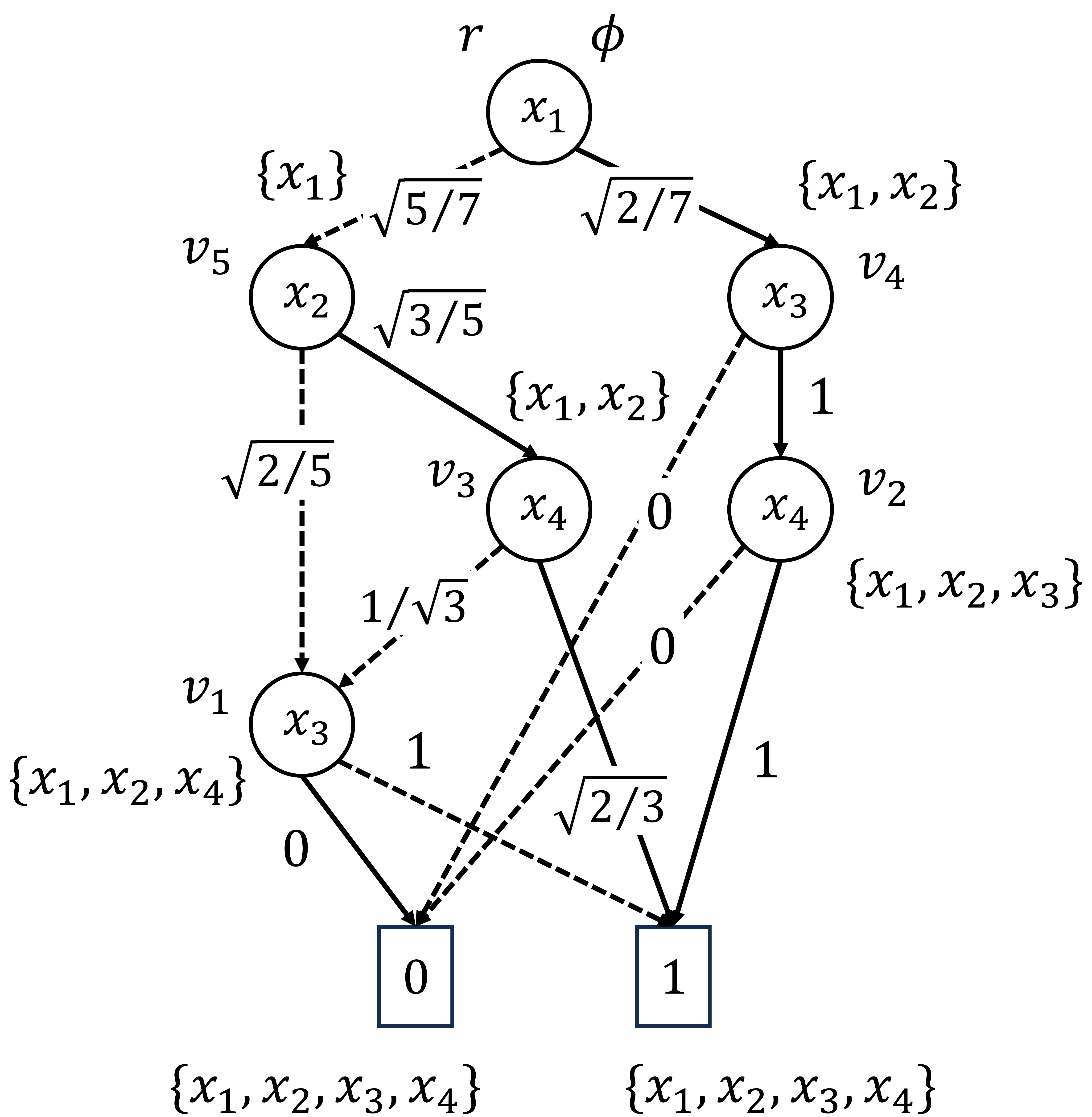}
    \subcaption{}
\end{minipage}
\caption{An example of Algorithm~\ref{alg:UWD}. 
(a) Initialization. For the terminal nodes, set $R(t_0) = R(t_1) := \{x_1,x_2,x_3,x_4\}$ and $M(t_0)=0$, $M(t_1)=1$. Here $M(t_1)=1$ counts only the accepting terminal pattern itself; multiplicities arising from variables skipped on incoming edges are accounted for in the branch counts. (b) Starting from the terminal nodes, process nodes whose children have already been assigned remaining-variable sets and model counts. In this round, the nodes $v_1$ and $v_2$ are processed. For $v_1$, the common remaining-variable set of its two children is $Y(v_1)=R(t_0)\cap R(t_1)=\{x_1,x_2,x_3,x_4\}$. Since the variable assigned to $v_1$ is $x_3$, we set $R(v_1)=Y(v_1)\setminus\{x_3\}=\{x_1,x_2,x_4\}$. The branch leading to $t_1$ has effective count $1$, whereas the branch leading to $t_0$ has effective count $0$; hence the corresponding locally normalized weights are $1$ and $0$, respectively. The node $v_2$ is processed in the same way. (c) Next, process the nodes $v_3$ and $v_4$. Consider the node $v_3$, whose $1$-edge leads to $t_1$ and whose $0$-edge leads to $v_1$ in this example. Since $Y(v_3)=R(v_1)\cap R(t_1)=\{x_1,x_2,x_4\}$, the variable $x_3$ is skipped on the branch to $t_1$. Thus the effective branch count for this branch is multiplied by $2^{|R(t_1)\setminus Y(v_3)|}=2$, giving $M_1(v_3)=2M(t_1)=2$. The other branch has effective count $M_0(v_3)=M(v_1)=1$. Therefore $M(v_3)=M_0(v_3)+M_1(v_3)=3$, and the locally normalized weights are $w_0(v_3)=1/\sqrt{3}$ and $w_1(v_3)=\sqrt{2/3}$. The node $v_4$ is processed analogously. (d), (e) Repeating this bottom-up procedure assigns the remaining-variable set $R(u)$, the effective branch counts $M_b(u)$, the total count $M(u)=M_0(u)+M_1(u)$, and the locally normalized weights $w_b(u)=\sqrt{M_b(u)/M(u)}$ to all relevant internal nodes. Branches with no accepting continuation have effective count zero and hence receive weight zero.}
\label{example:alg:UWD}
\end{figure}
}

\begin{theorem}[Uniform-superposition WFBDD from an FBDD]
\label{lem:state_wfbdd}
\label{thm:uniform_wfbdd}
Let $\mathrm{FBDD}_f=(V,E)$ be a supplied FBDD representing a Boolean function $f:\{0,1\}^n\to\{0,1\}$ with $|f|>0$. Algorithm~\ref{alg:UWD} constructs a QSP-admissible $\mathrm{WFBDD}_f$ by assigning weights to the relevant internal nodes of the supplied FBDD, optionally identifying zero-count subdiagrams with $t_0$. The quantum state associated with the resulting WFBDD by Def.~\ref{def:q_state_wfbdd} is the uniform superposition $\ket{\phi_f}$ in Eq.~(\ref{eq:f_encoding}). The construction uses $O(|V|)$ node/edge queries to the supplied diagram. With explicit set representations, the classical set-processing time is $O(n|V|)$.
\end{theorem}

\begin{proof}[Proof sketch]
The effective branch count $M_b(u)$ counts accepting continuations below the child $h_b(u)$ together with the multiplicity of variables skipped by the branch $u\to h_b(u)$. Therefore the weights in Eq.~(\ref{eq:uniform_branch_weight}) locally normalize every relevant outgoing pair, and any branch with no accepting continuation receives weight zero; hence the constructed WFBDD is QSP-admissible. Along any accepting root-to-$t_1$ path, the product of the ratios $M_b(u)/M(u)$ telescopes, namely, the intermediate factors cancel successively so that only the endpoint counts remain, while the free-variable prefactor in Def.~\ref{def:q_state_wfbdd} cancels the skipped-variable factors. Consequently every satisfying assignment receives the same amplitude $1/\sqrt{|f|}$, and every rejecting assignment receives amplitude zero. The construction uses one reverse-topological pass through the supplied diagram, giving the stated query complexity and explicit-set preprocessing time. The detailed counting proof is given in Appendix~\ref{apdx:proof_uniform}.
\end{proof}

\section{Controlled Phase operation based on FBDD} \label{Sec5}

In this section, we show another encoding of a Boolean function into a quantum state.
For a given $\mathrm{FBDD}_f$, we show that Algorithm~\ref{alg:WFBDD_PG} constructs a quantum circuit realizing a unitary such that $U_f\ket{x} = e^{i\theta f(x)}\ket{x}$.
Using such a quantum circuit, we can trivially prepare an $n$-qubit quantum state 
\begin{eqnarray}
    \ket{\psi_{\theta}} := \frac{1}{\sqrt{2^n}} \sum_{x \in \{0,1\}^n} e^{i\theta f(x)} \ket{x},
\end{eqnarray}
from a uniform superposition $(H\ket{0})^{\otimes n}$.

Algorithm~\ref{alg:WFBDD_PG} is derived by modifying Algorithm~\ref{alg:WFBDD_SG} slightly.
In Algorithm~\ref{alg:WFBDD_SG}, notice that the edge-label sequence $z = z_1\cdots z_n$ of a path from the root $r$ to the terminal node $t_1$ represents the basis state $\ket{z}$ in the support $\{ \ket{x} \ | \ x \in \{ 0,1\}^n, f(x) = 1 \}$, and the nodes on the path are recorded in the ancillary qubits. 
Then, to construct the quantum circuit realizing a unitary such that $U_f\ket{x} = e^{i\theta f(x)}\ket{x}$, first, let the $n$ qubits assigned to the variables $x_1,\ldots, x_n$ be the input quantum registers (or replace Line~\ref{alg:SG3} of Algorithm~\ref{alg:WFBDD_SG} with Line~\ref{alg:SG3} of Algorithm~\ref{alg:WFBDD_PG}) and assign an ancillary qubit to the terminal node $t_1$ on which the result $f(x)$ is output (Line~\ref{alg:PG6} of Algorithm~\ref{alg:WFBDD_PG}). 
Second, remove the controlled-$H$ gates and controlled-$U(u)$ gates (or remove Lines~\ref{alg:SG10} and \ref{alg:SG11} of Algorithm~\ref{alg:WFBDD_SG}) and remove the terminal node $t_1$ from the condition in Line~\ref{alg:SG17} (or see Line~\ref{alg:PG16} of Algorithm~\ref{alg:WFBDD_PG}), thereby adding Toffoli gates that output $f(x)$ to the last ancillary qubit $t_1$. 
Finally, apply a phase gate to the ancillary qubit $t_1$ (Line~\ref{alg:PG23}) and undo all the Toffoli gates. (See also Fig.~\ref{example:alg:PG} as an example of Algorithm~\ref{alg:WFBDD_PG}.)
We give a corollary as a summary of the above discussion.
\begin{corollary}
    For a given $\mathrm{FBDD}_f = (V,E)$ encoding an $n$-bit Boolean function $f$, Algorithm~\ref{alg:WFBDD_PG} gives a quantum circuit $U_f$ consisting of a phase gate and at most $4( |V|-1 )$ Toffoli gates with $|V|-1$ ancillary qubits such that for $x \in \{ 0,1 \}^n$,
    \begin{eqnarray}
        U_f : \ket{x} \otimes \ket{1} \otimes \ket{0}^{\otimes |V|-2} \mapsto e^{i\theta f(x)} \ket{x} \otimes \ket{1} \otimes \ket{0}^{\otimes |V|-2}.
    \end{eqnarray}
\end{corollary}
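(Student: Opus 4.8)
The plan is to reduce the corollary to Theorem~\ref{thm:WS} by observing that Algorithm~\ref{alg:WFBDD_PG} is the ``bookkeeping skeleton'' of Algorithm~\ref{alg:WFBDD_SG}: it keeps exactly the part of the construction that records, on the ancillary qubits, which nodes of the $\mathrm{FBDD}_f$ lie on the path $P(z)$ selected by an input string $z$, while discarding the amplitude-shaping controlled-$H$ and controlled-$U(u)$ gates. I would therefore establish the corollary in three stages: first a structural invariant tracking the ancillary register, then the correct placement of the single phase gate, and finally the uncomputation that restores the ancillae to $\ket{0}$.

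First I would prove the path-recording invariant by induction on the topological ordering $\prec$ already shown to exist in the proof of Theorem~\ref{thm:WS}. Concretely, for an input basis state $\ket{x}$ with $x\in\{0,1\}^n$, after processing all nodes $u'\preceq u$ the state is $\ket{x}\otimes\ket{g_u(x)}$, where the bit of $g_u(x)$ assigned to a node $v$ is $1$ exactly when $v\in(V_u\cup\{h_0(u),h_1(u)\})\cap V(x)$, with $V(x)$ the node set of the path $P(x)$. This is precisely the $g_u$ of Theorem~\ref{thm:WS}, and the induction step is identical: because the Toffoli at Line~\ref{alg:PG16}-analogue fires controlled on the qubit for $u$ and on $x_{i(u)}$ (with the $b=0$ branch flanked by $X$ gates), the head node $h_b(u)$ on the chosen edge is flipped to $1$. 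The only difference from Theorem~\ref{thm:WS} is that the terminal $t_1$ is now assigned its own ancillary qubit and is \emph{not} excluded from the condition in Line~\ref{alg:SG17} (cf.\ Line~\ref{alg:PG16}), so the final Toffoli records whether $P(x)$ reaches $t_1$. Since $f(x)=1$ iff $P(x)$ terminates at $t_1$, after the main loop the $t_1$-qubit holds $f(x)$ and the state is $\ket{x}\otimes\ket{f(x)}\otimes\ket{\text{(node flags)}}$.

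Second, applying the phase gate $\ket{b}\mapsto e^{i\theta b}\ket{b}$ to the $t_1$-qubit (Line~\ref{alg:PG23}) multiplies the amplitude of $\ket{x}$ by $e^{i\theta f(x)}$, since that qubit carries the bit $f(x)$. Third, the Undo loop pops the stack in reverse $\prec$-order and replays only the Toffoli-and-$X$ blocks; because each Toffoli is its own inverse and the controls (the $x_i$ and the node qubits) are left untouched by the phase gate, running them in reverse order restores every ancillary qubit to $\ket{0}$. This yields the stated map $\ket{x}\otimes\ket{0}^{\otimes|V|-1}\mapsto e^{i\theta f(x)}\ket{x}\otimes\ket{0}^{\otimes|V|-1}$. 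The gate count follows as in Theorem~\ref{thm:WS}: each internal node contributes at most two Toffoli gates in the forward pass and the same in the Undo pass, giving at most $2|V|$ Toffoli gates, one phase gate, and $|V|-1$ ancillae (one per internal node, which now includes the $t_1$-qubit but excludes $t_0$).

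The main obstacle I anticipate is verifying that the uncomputation is truly clean, i.e.\ that the phase gate does not obstruct the reversibility of the node-flag bookkeeping. This is where one must check that the phase acts \emph{diagonally} on the $t_1$-register and hence commutes with the control structure of every Toffoli being undone; a careless ordering could leave a residual entanglement between the phase-carrying qubit and the other ancillae. Relatedly, I would make sure the invariant correctly treats the degenerate case where the deleted-node multiplicities of Theorem~\ref{thm:WS} are now irrelevant—here there are no $w_b(u)$ ratios, so the $\alpha_u(z)$ factor of Theorem~\ref{thm:WS} collapses and only the Boolean bookkeeping $g_u$ survives, which I expect to simplify rather than complicate the argument.
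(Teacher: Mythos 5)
Your proposal matches the paper's own argument: the paper likewise derives the corollary by viewing Algorithm~\ref{alg:WFBDD_PG} as Algorithm~\ref{alg:WFBDD_SG} stripped of the controlled-$H$ and controlled-$U(u)$ gates, with the path-recording Toffolis (the $g_u$ invariant from Theorem~\ref{thm:WS}) now extended to write $f(x)$ onto a $t_1$ ancilla, followed by a single diagonal phase gate and an uncomputation of the Toffoli layer. Your version is in fact somewhat more explicit than the paper's (which presents the corollary as a summary of an informal discussion rather than a standalone induction), and your gate/ancilla accounting reproduces the paper's stated bounds.
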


\begin{figure}[!t]
\begin{algorithm}[H]
{\nolinenumbers
\caption{ControlledPhase} \label{alg:WFBDD_PG}
}
\begin{algorithmic}[1]
\Require A $\mathrm{FBDD}_f = (V,E)$ in Def.~\ref{def:fbdd}, where let $i(u)$, $e_b(u)$, and $h_b(u)$ denote the index of the variable assigned to an internal node $u$, the $b$-edge out from $u$, and the child node of $e_b(u)$, respectively.
\Ensure A quantum circuit realizing a unitary such that $U_f \ket{x} = e^{i\theta f(x)}\ket{x}$. 
\State {\bf [Initialization]} \label{alg:PG1}
\State Prepare a set $Q = \{ r \}$ and an empty stack $S$. \label{alg:PG2}
\State Assign qubits for an input state to the variables $x_1,\ldots, x_n$. \label{alg:PG3}
\State Assign a qubit initialized to $\ket{1}$ to the root $r$. \label{alg:PG4}
\State Assign qubits initialized to $\ket{0}$ to the internal nodes except the root $r$. \label{alg:PG5}
\State Assign a qubit initialized to $\ket{0}$ to the terminal node $t_1$. \label{alg:PG6}
\State {\bf [Main loop]} \label{alg:PG7}
\While{the set $Q$ is not empty} \label{alg:PG8}
    \State Remove a node $u$ from $Q$. \label{alg:PG9}
    \State Push the node $u$ to the stack $S$. \label{alg:PG10}
    \ForEach{$b$-edge $e_b(u)$ out from the node $u$ and its child node $h_b(u)$} \label{alg:PG11}
        \State Remove the $b$-edge $e_b(u)$ from $E$. \label{alg:PG12}
        \If{the child node $h_b(u)$ has no incoming edge and is not a terminal node} \label{alg:PG13}
            \State Add the child node $h_b(u)$ to the set $Q$. \label{alg:PG14}
        \EndIf \label{alg:PG15}
        \If{$h_b(u) \neq t_0$} \label{alg:PG16}
            \State Apply the $X$ gate to the qubit $x_{i(u)}$ when $b = 0$. \label{alg:PG17}
            \State Apply the Toffoli gate to the qubit $h_b(u)$ controlled by the qubits $u$ and $x_{i(u)}$. \label{alg:PG18}
            \State Apply the $X$ gate to the register assigned to $x_{i(u)}$ when $b = 0$. \label{alg:PG19}
        \EndIf \label{alg:PG20}
    \EndFor \label{alg:PG21}
\EndWhile \label{alg:PG22}
\State Apply a phase gate to the qubit $t_1$. \label{alg:PG23}
\State {\bf [Undo]} \label{alg:PG24}
\While{the stack $S$ is not empty} \label{alg:PG25}
    \State Pop a node $u$ from the stack $S$. \label{alg:PG26}
    \State For each original $b$-edge $e_b(u)$ out from $u$, do Lines~\ref{alg:PG16}--\ref{alg:PG20}.  \label{alg:PG27}
\EndWhile \label{alg:PG28}
\end{algorithmic}
\end{algorithm}
\end{figure}

{\nolinenumbers
\begin{figure}[t]
    \begin{minipage}[b]{0.4\linewidth}
        \centering
        \includegraphics[keepaspectratio, scale=0.15]{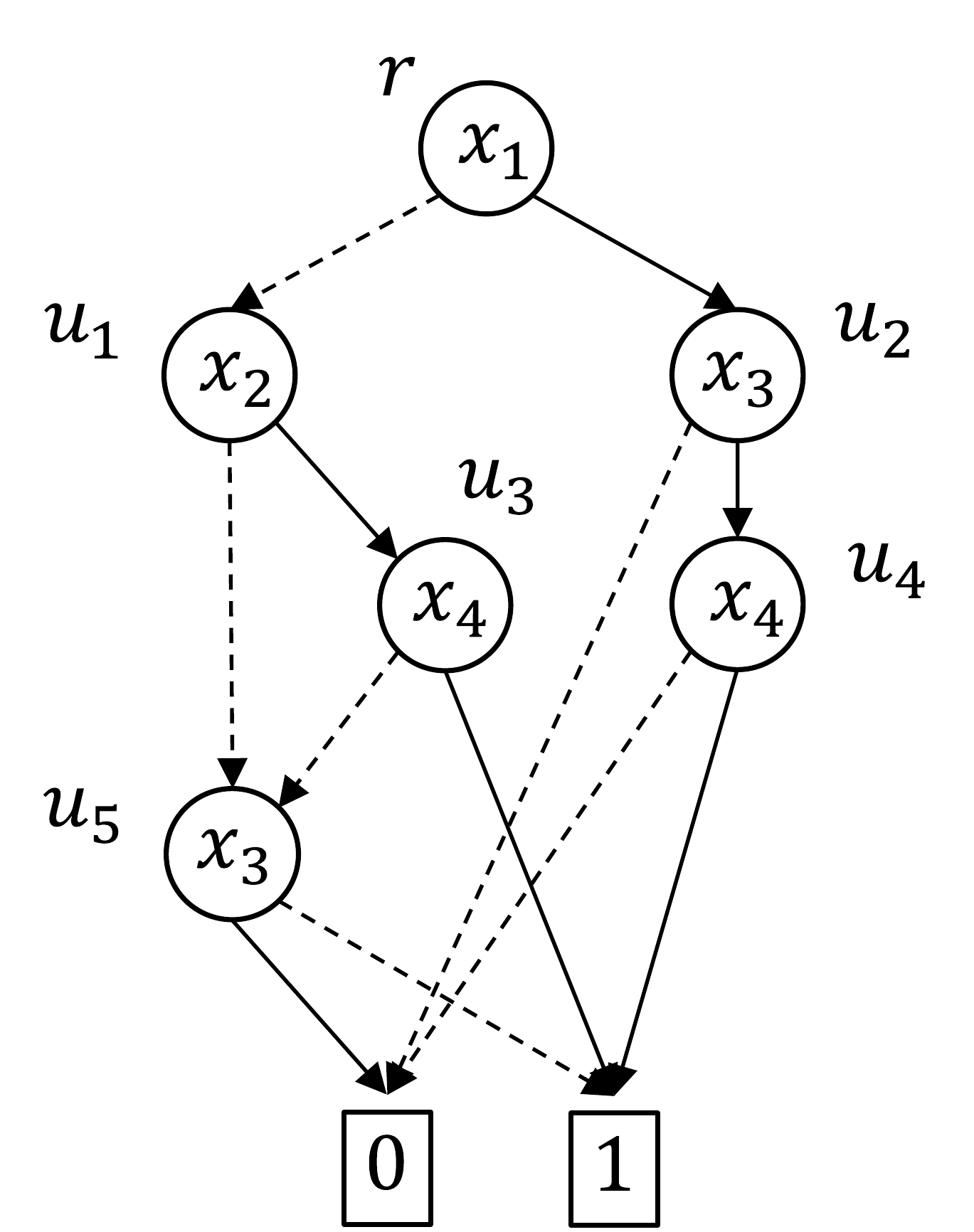}
        \subcaption{}
    \end{minipage}
    \begin{minipage}[b]{0.5\linewidth}
        \centering
        \includegraphics[keepaspectratio, scale=0.15]{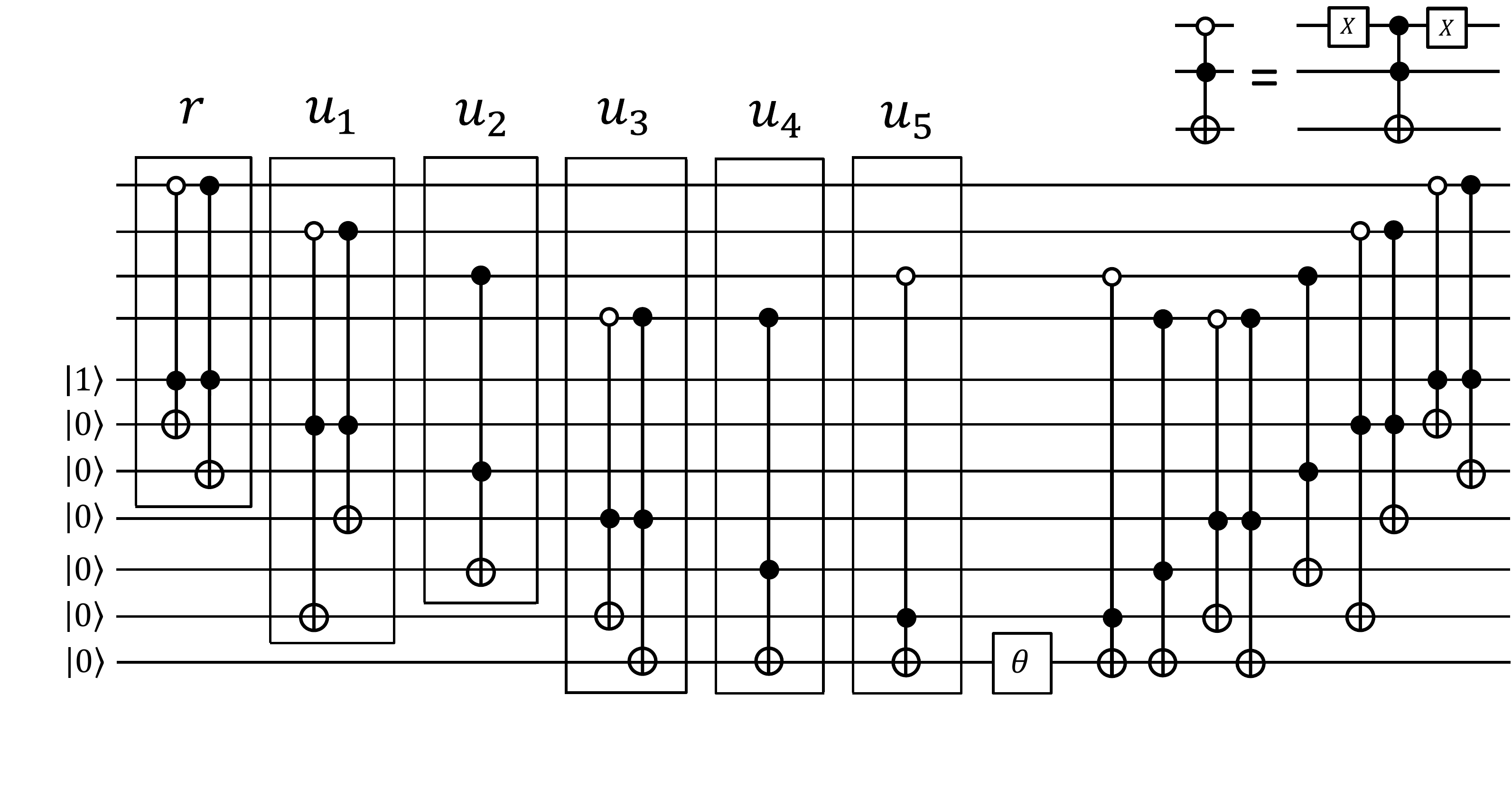}
        \subcaption{}
    \end{minipage}
    \caption{An example of Algorithm \ref{alg:WFBDD_PG}. (a) An example of $\textrm{FBDD}_f$. In Algorithm \ref{alg:WFBDD_PG}, all the internal nodes are pushed to a stack $S$ in a topological ordering such that for every directed edge $(u, v)$ out from $u$ to $v$, $u$ comes before $v$, {\it i.e.}, $r \prec u_1 \prec \cdots \prec u_5$. Start from the root $r$ and push $r$ to $S$. Then, remove the $0$-edge $(r, u_1)$ and the $1$-edge $(r, u_2)$, at which, if the indegree of $u_{1(2)}$ is zero, $u_{1(2)}$ is added to a set $Q$. The added nodes are pushed to $S$ in the following rounds and are ordered. Repeating this process, we order all the internal nodes and push them to $S$. (b) The quantum circuit generated from the $\textrm{FBDD}_f$. Initially, assign qubits to the variables $x_1,\ldots, x_4$, the internal nodes $r, u_1,\ldots, u_5$, and the terminal node $t_1$. Each qubit $x_i$ serves as an input qubit, the qubit $r$ is set to $\ket{1}$, and each qubit $u_j$ is set to $\ket{0}$. Then, the Toffoli gate controlled by the qubits $r$ and $x_1$ is applied to each qubit $u_{1(2)}$ as shown in the figure. Applying the Toffoli gates for rest of the internal nodes in the topological ordering, we derive the quantum circuit in the figure.} \label{example:alg:PG}
\end{figure}
}

\section{Application to Black-Box QSP} \label{Sec6}

In this section, we give an $n$-qubit quantum state for which the cost expression of the black-box QSP algorithm of Ref.~\cite{Bausch2022fastblackboxquantum} is exponential, while the state can be generated efficiently by our proposed WFBDD-based state-preparation procedure. 
For a fixed constant $\delta \in (0,1)$, we consider an $n$-qubit quantum state
\begin{eqnarray}
    \ket{\psi_{\delta}} = \frac{1}{|| \alpha ||_2} \sum_{x \in \{ 0,1 \}^n} \alpha_x \ket{x}, \label{eq:bb_state}
\end{eqnarray}
where $\alpha_x = \delta^{w_H(x)} / \binom{n}{w_H(x)}$, and $w_H(x)$ is the Hamming weight of $x$. We show that the cost expression for the number of amplitude-amplification steps required to generate the quantum state $\ket{\psi_{\delta}}$ is exponential for the black-box QSP algorithm of Ref.~\cite{Bausch2022fastblackboxquantum}, and we give a weighted FBDD describing $\ket{\psi_{\delta}}$. 

The black-box QSP in Ref.~\cite{Bausch2022fastblackboxquantum} transforms an initial state through an intermediate state to the target state by performing the nested amplitude amplifications. Furthermore, for an initial state and a target state 
\begin{eqnarray}
    \ket{s} &:=& \frac{1}{\lVert \bar{A} \rVert_2} \sum_{j=0}^{g-1} \bar{A}_j \ket{j}, \\
    \ket{t} &:=& \frac{1}{\lVert \alpha \rVert_2} \sum_{i=0}^{N-1} \alpha_i \ket{i},
\end{eqnarray}
where
\begin{eqnarray}
    \bar{A}_j &:=& \sum_{i=0}^{N-1} b_{ij} \sqrt{w_j}, \\
    \alpha_i &:=& \sum_{j=0}^{g-1} b_{ij} w_j,
\end{eqnarray}
and $B=(b_{ij})$ is some Boolean matrix, the number of nested amplitude amplifications is shown to be evaluated as $O\left(\sqrt{N}\lVert \alpha \rVert_1 / \lVert \bar{A} \rVert_2\right)$~\cite{Bausch2022fastblackboxquantum}.
Set $N = 2^n$, $g = n+1$, and 
\begin{eqnarray}
    b_{ij} &=& 
    \left\{
        \begin{array}{ll}
            1 & w_H(i) = j, \\
            0 & \text{otherwise}.
        \end{array}
    \right. \\
    w_j &=& \frac{\delta^j}{\binom{n}{j}},
\end{eqnarray}
we derive that
\begin{eqnarray}
    \bar{A}_j &=& \left(\binom{n}{j} \delta^j \right)^{\frac{1}{2}}, \\
    \alpha_i &=& \frac{\delta^{w_H(i)}}{\binom{n}{w_H(i)}},
\end{eqnarray}
and then
\begin{eqnarray}
    \sqrt{N} \frac{\lVert \alpha \rVert_1}{\lVert \bar{A} \rVert_2} = \frac{1-\delta^{n+1}}{1-\delta} \left( \frac{2}{1+\delta} \right)^{\frac{n}{2}}.
\end{eqnarray}
Thus, for any fixed constant $\delta\in(0,1)$, this cost expression for the nested amplitude amplifications is exponential in $n$.

{\nolinenumbers
\begin{figure}[t]
    \centering
    \includegraphics[keepaspectratio,scale=0.2]{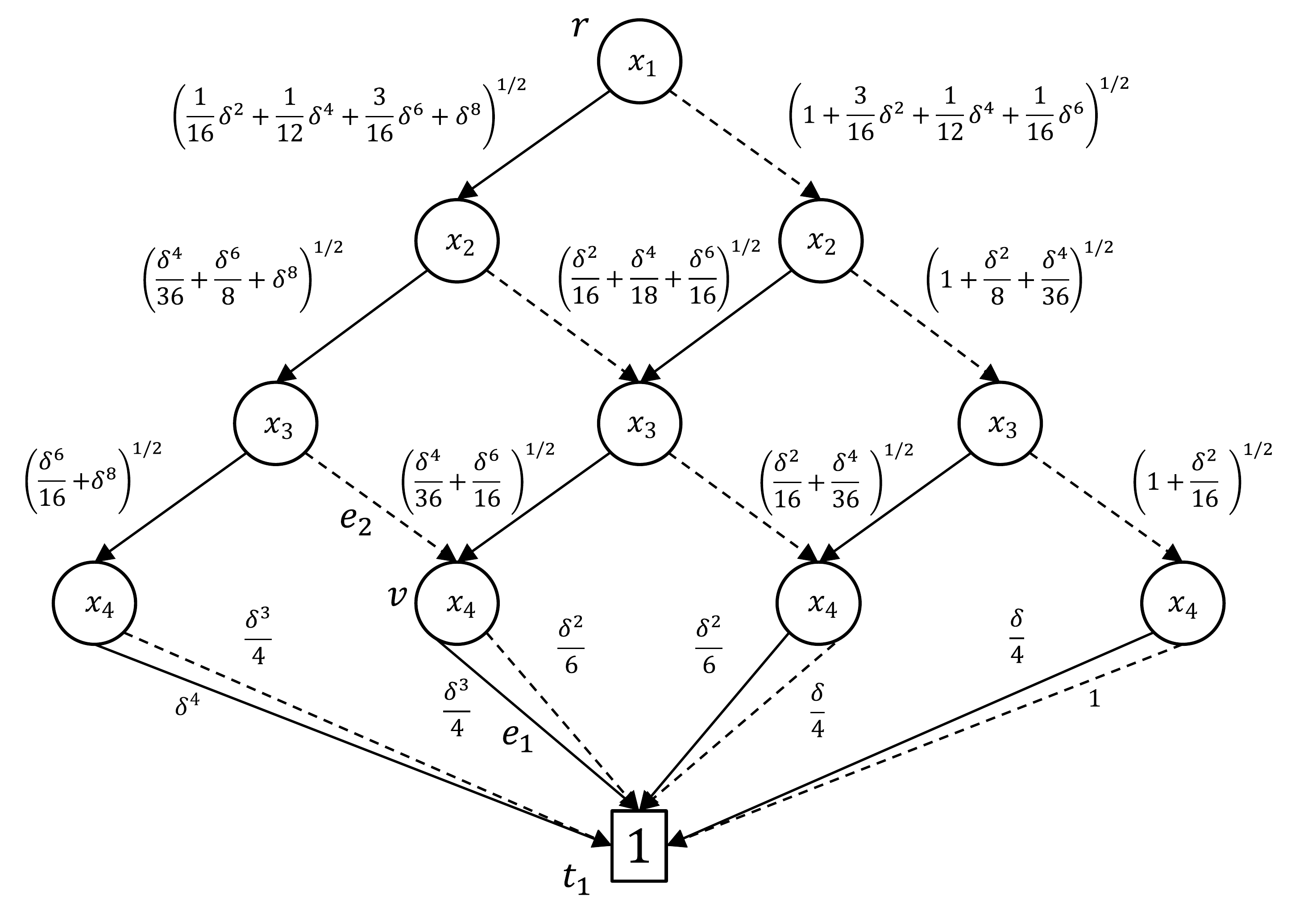}
    \caption{A weighted OBDD describing a $4$-qubit quantum state in Eq.~(\ref{eq:bb_state}). The weighted OBDD consists of $4$ layers, where the $i$-th layer has $i$ nodes labeled with a variable $x_i$. For $i = 1,2,3$ and $j \in [i]$, the $j$-th node in the $i$-th layer connects to the $j$-th node in the $(i+1)$-th layer with the $1$-edge and connects to the $(j+1)$-th node in the $(i+1)$-th layer with the $0$-edge. Any node in the $4$-th layer connects to the $1$-terminal node with both of the $0$-edge and the $1$-edge. Weights are assigned as follows. For example, for an edge $e_1$ going into the terminal node $t_1$, notice that any path from the root $r$ to $t_1$ through the edge $e_1$ passes through three $1$-edges. Thus, remembering that the coefficient $\alpha_x$ is determined by the Hamming weight $w_H(x)$, we can assign $\alpha_x = \delta^3/4$ on the edge $e_1$. The weight of another unweighted edge, {\it e.g.}, $e_2$, is $\left( \delta^4/36 + \delta^6/16 \right)^{1/2}$, the square root of the sum of the squared weights of the edges outgoing from the node $v$ into which $e_2$ goes. The displayed edge weights are raw subtree-norm weights; the QSP-admissible WFBDD used by the state-preparation procedure is obtained by applying the local normalization in Eq.~(\ref{eq:local_normalized_weight}).} \label{fig:4comb_wfbdd}
\end{figure}
}

On the other hand, we show that the $n$-qubit quantum state $\ket{\psi_{\delta}}$ can be written in the form of a weighted OBDD with $n(n+1)/2$ internal nodes. 
Then, from Theorem~\ref{thm:WS}, Algorithm~\ref{alg:WFBDD_SG} gives the quantum state $\ket{\psi_{\delta}}$ with $O(n(n+1))$ uses of single- or two-qubit gates and $n(n+1)/2$ ancillary qubits.

The weighted OBDD consists of $n$ layers, where the $i$-th layer has $i$ nodes labeled with a variable $x_i$. (See Fig.~\ref{fig:4comb_wfbdd} in the case of $4$ qubits.) 
For $i \in [n-1]$ and $j \in [i]$, the $j$-th node in the $i$-th layer connects to the $j$-th node in the $(i+1)$-th layer with the $1$-edge and connects to the $(j+1)$-th node in the $(i+1)$-th layer with the $0$-edge. Any node in the $n$-th layer connects to the $1$-terminal node with both of the $0$-edge and the $1$-edge. We include an unreachable $0$-terminal $t_0$ only to match the terminal convention used in the definitions and algorithms.

Weights are assigned as follows. 
First, for an edge $e$ going into the terminal node $t_1$, notice that any path from the root $r$ to $t_1$ through the edge $e$ passes through the same number of $1$-edges. Thus, remembering that the coefficient $\alpha_x$ is determined by the Hamming weight $w_H(x)$, we can assign each $\alpha_x$ on each edge going into the terminal node $t_1$ based on the Hamming weight $w_H(x)$ of $x$ and the number of $1$-edges passed. The weight of the other unweighted edge is the square root of the sum of the squared weights of the edges outgoing from the node into which the unweighted edge goes. 

The weights assigned in this way are raw subtree-norm weights and are not
necessarily locally normalized. We therefore apply the local normalization
of Eq.~(\ref{eq:local_normalized_weight}) to obtain the locally normalized
representative $\widetilde G$. More explicitly, if $S(u)$ denotes the sum
of the squared moduli of the amplitudes represented by the continuations
below $u$, then the raw weight of the edge $u\to h_b(u)$ is
$\sqrt{S(h_b(u))}$, with the terminal-layer convention that the edge
entering $t_1$ carries the corresponding coefficient $\alpha_x$. Hence,
after local normalization,
\begin{eqnarray}
    \widetilde w_b(u)
    =
    \sqrt{\frac{S(h_b(u))}{S(u)}},
\end{eqnarray}
with the analogous terminal-edge expression. Along the path corresponding
to $x$, these normalization factors telescope and yield
\begin{eqnarray}
    \prod_{u\in V_x\setminus\{t_0,t_1\}}
    \widetilde w_{x_{i(u)}}(u)
    =
    \frac{\alpha_x}{\sqrt{S(r)}}
    =
    \frac{\alpha_x}{\|\alpha\|_2},
\end{eqnarray}
Since every root-to-terminal path in this example reaches $t_1$, the
locally normalized representative $\widetilde G$ is QSP-admissible and its
quantum-state semantics is exactly $|\psi_\delta\rangle$.

\section{Application to the block encoding} \label{Sec7}

Our techniques for QSP can be applied to the block encoding, a technique of encoding a matrix as a block of a unitary, defined as follows (Definition 47 of Ref.~\cite{Gilyen2018-gs}).
\begin{definition}
    Let $A$ be an $s$-qubit operator, $\alpha, \epsilon \in \mathbb{R}_+$, and $a \in \mathbb{N}$. We say that the $(s+a)$-qubit unitary $U$ is an $(\alpha, a, \epsilon)$-block encoding of $A$ if
    \begin{eqnarray}
        || A - \alpha(\bra{0}^{\otimes a} \otimes I) U (\ket{0}^{\otimes a} \otimes I) || \le \epsilon,
    \end{eqnarray}
    where $|| \cdot ||$ is the operator norm.
    Note that since $|| U || = 1$, we necessarily have $|| A || \le \alpha + \epsilon$.
\end{definition}
References~\cite{Gilyen2018-gs, 10.1145/3313276.3316366} give constructions of block encodings of density operators, POVM operators, Gram matrices, and sparse-access matrices, respectively (Lemma 45 - 48 of Ref.~\cite{Gilyen2018-gs}). 
Therefore, when the unitary operators or the oracles assumed in each lemma of Ref.~\cite{Gilyen2018-gs} are given through weighted FBDDs, we can construct the block encodings.

For example, we consider a $\mathrm{WFBDD}_f$ representing an $n$-qubit quantum state $\ket{\psi_f} = \sum_z c_z f(z) \ket{z}$. 
From Algorithm~\ref{alg:WFBDD_SG}, we can implement a quantum circuit $U_f$ such that 
\begin{eqnarray}
    U_f : \ket{0}^{\otimes n} \otimes \ket{1}_r \otimes \ket{0}^{\otimes |V|-3} \mapsto \sum_{z \in \{ 0,1 \}^n} c_z f(z) \ket{z} \otimes \ket{1}_r \otimes \ket{0}^{\otimes |V|-3},
\end{eqnarray}
where $|V|$ is the number of nodes.
Then, $(U_f^{\dagger} \otimes I_n)\textsc{SWAP}(U_f \otimes I_n)$ is the $(1,n+|V|-2,0)$-block encoding of $\ket{\psi_f}\bra{\psi_f}$ with respect to the fixed ancillary state $\ket{0}^{\otimes n}\ket{1}_r\ket{0}^{\otimes |V|-3}$, where $\textsc{SWAP}$ is the operation swapping the first $n$ qubits and the last $n$ qubits. Equivalently, conjugating by a Pauli-$X$ on the root ancillary qubit converts this to the standard all-zero ancillary convention.
As shown in Lemma 45 of Ref.~\cite{10.1145/3313276.3316366}, we have that
\begin{eqnarray}
    && \bra{0}^{\otimes n}\bra{1}_r\bra{0}^{\otimes |V|-3} \otimes \bra{x} (U_f^{\dagger} \otimes I_n)\textsc{SWAP}(U_f \otimes I_n) \ket{0}^{\otimes n}\ket{1}_r\ket{0}^{\otimes |V|-3} \otimes \ket{y} \nonumber \\
    &=& \left( \sum_z c^{\ast}_z f(z) \bra{z} \otimes \bra{1}_r \otimes \bra{0}^{\otimes |V|-3} \otimes \bra{x} \right) \textsc{SWAP} \left( \sum_{z'} c_{z'} f(z') \ket{z'} \otimes \ket{1}_r \otimes \ket{0}^{\otimes |V|-3} \otimes \ket{y} \right) \\
    &=& \left( \sum_z c^{\ast}_z f(z) \bra{z} \otimes \bra{1}_r \otimes \bra{0}^{\otimes |V|-3} \otimes \bra{x} \right) \left( \sum_{z'} c_{z'} f(z') \ket{y} \otimes \ket{1}_r \otimes \ket{0}^{\otimes |V|-3} \otimes \ket{z'} \right) \\
    &=& \sum_{z,z'} c^{\ast}_z c_{z'} f(z)f(z') \braket{z|y}\braket{x|z'} \\
    &=& \braket{x | \psi_f} \braket{\psi_f | y}.
\end{eqnarray}

As another example, for the above $\mathrm{WFBDD}_f$, if we can efficiently implement controlled versions of the unitaries $V_z \in SU(2^n)$, we can construct a $(1, n +|V|-2, 0)$-block encoding of a $2^n \times 2^n$ matrix 
\begin{eqnarray}
    A = \sum_{ z \in \{ 0,1 \}^n } |c_z|^2 f(z) V_z,
\end{eqnarray}
by following the technique of the block encoding of Gram matrices (Lemma 48 of \cite{Gilyen2018-gs}).
Specifically, we can construct two unitaries $U_L$ and $U_R$ such that
\begin{eqnarray}
    U_L : \ket{0^n} \otimes \ket{1}_r \otimes \ket{0}^{\otimes |V|-3} \otimes \ket{y} &\mapsto& \sum_{z \in \{ 0,1 \}^n} c_z f(z) \ket{z} \otimes \ket{1}_r \otimes \ket{0}^{\otimes |V|-3} \otimes V(z) \ket{y}, \\
    U_R : \ket{0^n} \otimes \ket{1}_r \otimes \ket{0}^{\otimes |V|-3} \otimes \ket{x} &\mapsto& \sum_{z \in \{ 0,1 \}^n} c_z f(z) \ket{z} \otimes \ket{1}_r \otimes \ket{0}^{\otimes |V|-3} \otimes \ket{x},
\end{eqnarray}
and then $U_R^{\dagger}U_L$ is the block encoding of $A$ in the same fixed-ancilla convention.

The above two results can give non-trivial examples of block encodings of non-sparse matrices.
For example, the block encodings of matrices derived from the $\mathrm{WFBDD}_h$, where $h$ is the Boolean function in Eq.~(\ref{eq:dense_h}), provide such examples, since the $\mathrm{WFBDD}_h$ represents an $m$-qubit quantum state with $d=2^{\Omega(m)}$ nonzero coefficients, where $m=3n+2$ in the notation of Section~\ref{Sec3}, as shown in Section~\ref{Sec3}.

\section{Conclusion} \label{Sec8}

In this paper, we proposed a quantum state-preparation (QSP) circuit construction for the setting in which the classical description of an $n$-qubit quantum state can be represented by a weighted free binary decision diagram ($\mathrm{WFBDD}_f$)~\cite{Breitbar_P1995-dg} whose locally normalized representative is QSP-admissible. We analyzed the computational complexity of the proposed construction and constructed a quantum circuit preparing the $n$-qubit quantum state with $O(N)$ uses of single- or two-qubit gates and $N-2$ ancillary qubits.

The efficiency of our state-preparation procedure depends on the compression rate of WFBDDs. 
More precisely, the efficiency statements assume that a compact WFBDD representation whose locally normalized representative is QSP-admissible is already available. The cost of discovering or constructing such a representation is not included in the circuit complexity. When coherent access to diagram data is required, a qROM/QRAM-style data-access mechanism can be used, and its construction cost should be accounted for separately.
When $N = O(\mathrm{poly}(n))$, our construction gives an $O(\mathrm{poly}(n))$-sized QSP circuit with $\mathrm{poly}(n)$ ancillary qubits, which reduces the ancillary qubits needed in existing QSP methods for a general class of quantum states~\cite{Zhang2022-bb,10.1109/TCAD.2023.3244885,Rosenthal2021-pj}.
Furthermore, we gave an example of an $O(\mathrm{poly}(n))$-sized $\mathrm{WFBDD}_f$ whose locally normalized representative is QSP-admissible and describes an $m$-qubit quantum state with $2^{\Omega(m)}$ nonzero coefficients for $m=\Theta(n)$.
From these results, we identify a new subclass of efficiently preparable quantum states based on WFBDDs.

We gave three examples of QSP tasks that our FBDD-based state-preparation procedure can resolve efficiently.
First, we studied preparing a uniform superposition over a Boolean function, {\it i.e.}, $\sum_{x\in \{0,1\}^n} f(x) \ket{x} / \sqrt{|f|}$.
We showed that there exists an $O(N)$-sized classical algorithm assigning appropriate edge weights to convert a given $\mathrm{FBDD}_f$ to the $\mathrm{WFBDD}_f$ which describes the uniform superposition.
Our construction exponentially improves the number of \textsc{CNOT} gates used by another BDD-based QSP~\cite{Mozafari2020-lq}, which also aims to prepare a uniform superposition over a Boolean function.

Second, to prepare another superposition state encoding a Boolean function, when the Boolean function is represented by a $\mathrm{FBDD}_f$ with $N$ nodes, we modified our circuit-construction procedure to construct a controlled phase operation $\ket{x} \mapsto e^{i\theta f(x)} \ket{x}$ using at most $4(N-1)$ Toffoli gates and $N-1$ ancillary qubits. This result provides an instantiation of a phase oracle for a Boolean function, from which we can prepare the quantum state $\sum_{x \in \{0,1\}^n} e^{i\theta f(x)} \ket{x}$ trivially. 

Third, we give one example of $\mathrm{WFBDD}_f$ with $O(n^2)$ nodes describing an $n$-qubit quantum state for which the cost expression for the number of amplitude-amplification steps in the black-box QSP of Ref.~\cite{Bausch2022fastblackboxquantum} is exponential in $n$. Thus, this example shows that the QSP procedures are complementary to each other.

Furthermore, we showed that our state-preparation procedure can be applied to the block encodings introduced in Ref.~\cite{10.1145/3313276.3316366}. 
Our research is significant in that it shows a different class of preparable quantum states characterized by weighted graphs encoding Boolean functions than those of previous works~\cite{Zhang2022-bb,10.1109/TCAD.2023.3244885,Rosenthal2021-pj}.
As future work, it would be interesting to explore problems involving Boolean functions encoded by FBDDs and to investigate their solutions through quantum state preparation.

\begin{acknowledgments}
H.Y.\ was supported by JST PRESTO Grant Number JPMJPR201A, JPMJPR23FC, JSPS KAKENHI Grant Number JP23K19970, and MEXT Quantum Leap Flagship Program (MEXT QLEAP) JPMXS0118069605, JPMXS0120351339\@.
M.M. was supported by MEXT Q-LEAP Grant No. JPMXS0118069605, JSPS KAKENHI Grant Nos. 21H03394 and 23K21643, JST CREST Grant No. JPMJCR25I5, JST ASPIRE Grant No. JPMJAP25A3, and JST NEXUS Grant No. JPMJNX26C9@.

\end{acknowledgments}


\appendix

\section{Detailed proofs for WFBDD state preparation and uniform construction}
\label{Apdx:proofs}

\subsection{Proof details for Lemma~\ref{lem:1}}
\label{apdx:proof_lemma1}

As preparation, letting $N_c(u)$ denote the child nodes of the directed edges out from an internal node $u$, let us define the layers as follows:
\begin{itemize}
    \item[1.] $L_1=\{t_0,t_1\}$;
    \item[2.] for $i>1$,
    \begin{eqnarray}
        L_i=\left\{u\in V\setminus\{t_0,t_1\}\ \middle|\
        N_c(u)\subseteq L_1\cup\cdots\cup L_{i-1},\quad
        N_c(u)\cap L_{i-1}\neq\emptyset
        \right\}.
    \end{eqnarray}
\end{itemize}
Figure~\ref{fig:lemma1_layers} illustrates this layer construction.
Since the underlying graph is acyclic, every node belongs to one of these layers. Indeed, otherwise one could repeatedly move from an unlayered internal node to an unlayered child node, contradicting acyclicity and the fact that the terminal nodes are in $L_1$.

{\nolinenumbers
\begin{figure}[!b]
    \centering
    \includegraphics[keepaspectratio,scale=0.2]{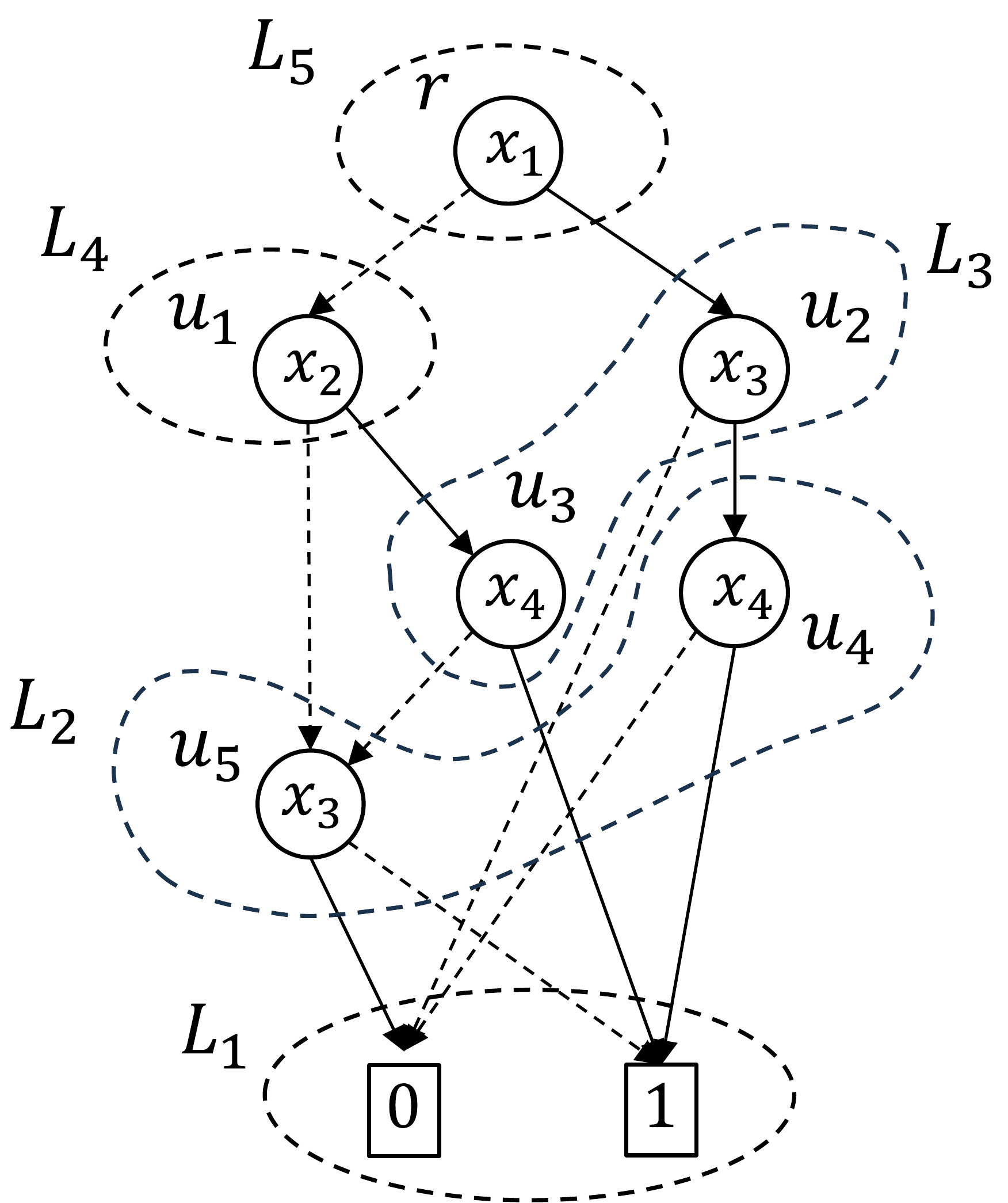}
    \caption{An example of layers in the proof of Lemma~\ref{lem:1}. The first layer $L_1$ is $\{t_0,t_1\}$ by definition. The second layer $L_2$ is $\{u_4,u_5\}$ since all the child nodes of the directed edges out from $u_4$ and $u_5$ belong to $L_1$. The remaining layers $L_3$, $L_4$, and $L_5$ can be defined sequentially in this way.}
    \label{fig:lemma1_layers}
\end{figure}
}

For a fixed root-to-node path $P_u=(V_u,E_u)$ ending at a node $u$, let $b_{P_u}(v) \in \{0,1\}$ be the branch selected at an internal node $v$ on this path. Let $B_{P_u}$ be the set of assignments $z\in\{0,1\}^n$ whose selected root-to-$t_1$ path contains the path $P_u$. If no such accepting path exists, then $B_{P_u}=\emptyset$.
We prove, by induction over the layers, that for every such path $P_u$,
\begin{eqnarray}
    \sum_{z\in B_{P_u}} |\alpha(z)|^2
    =
    \prod_{v\in V_u\setminus\{u,t_0,t_1\}}
    |w_{b_{P_u}(v)}(v)|^2.
    \label{eq:lemma1_layer_claim_revised}
\end{eqnarray}
The empty product is taken to be one. Notice that the right-hand side is the probability mass carried by the prefix $P_u$.

We first consider the terminal layer. If $u=t_1$, then all assignments in $B_{P_u}$ share the same accepting root-to-$t_1$ path $P_u$, while the variables not queried on this path are free. The number of such free variables is $n-|V_u|+1$. Therefore Def.~\ref{def:q_state_wfbdd} gives
\begin{eqnarray}
    \sum_{z\in B_{P_u}} |\alpha(z)|^2
    &=&
    2^{n-|V_u|+1}
    \frac{1}{2^{n-|V_u|+1}}
    \prod_{v\in V_u\setminus\{t_0,t_1\}}
    |w_{b_{P_u}(v)}(v)|^2 \\
    &=&
    \prod_{v\in V_u\setminus\{t_1,t_0\}}
    |w_{b_{P_u}(v)}(v)|^2,
\end{eqnarray}
which is Eq.~(\ref{eq:lemma1_layer_claim_revised}). If $u=t_0$, then $B_{P_u}=\emptyset$. Moreover, along any root-to-$t_0$ path that has nonzero prefix mass before reaching $t_0$, there is a first edge that is not contained in any root-to-$t_1$ path. By Condition~3 of Def.~\ref{def:qsp_admissible_wfbdd}, the corresponding branch weight is zero. Hence both sides of Eq.~(\ref{eq:lemma1_layer_claim_revised}) are zero.

Assume that Eq.~(\ref{eq:lemma1_layer_claim_revised}) holds for all nodes in $L_1\cup\cdots\cup L_i$, and take an internal node $u\in L_{i+1}$. For a fixed root-to-$u$ path $P_u$, let $P_{h_b(u)}$ be the path obtained by appending the $b$-edge of $u$ to $P_u$. The accepting assignments extending $P_u$ are the disjoint union of those extending $P_{h_0(u)}$ and those extending $P_{h_1(u)}$. By the induction hypothesis,
\begin{eqnarray}
    \sum_{z\in B_{P_u}} |\alpha(z)|^2
    &=&
    \sum_{b\in\{0,1\}}
    \sum_{z\in B_{P_{h_b(u)}}} |\alpha(z)|^2 \\
    &=&
    \prod_{v\in V_u\setminus\{u,t_0,t_1\}}
    |w_{b_{P_u}(v)}(v)|^2
    \left(|w_0(u)|^2+|w_1(u)|^2\right).
\end{eqnarray}
If the prefix $P_u$ carries nonzero mass, then $u$ lies on a root-to-$t_1$ path and is therefore relevant; Condition~2 of Def.~\ref{def:qsp_admissible_wfbdd} gives $|w_0(u)|^2+|w_1(u)|^2=1$. If the prefix already carries zero mass, the equality is trivial. Thus Eq.~(\ref{eq:lemma1_layer_claim_revised}) holds for $u$, completing the induction.

Finally, applying Eq.~(\ref{eq:lemma1_layer_claim_revised}) to the root path $P_r$ gives
\begin{eqnarray}
    \sum_{z:P(z)\text{ reaches }t_1} |\alpha(z)|^2=1.
\end{eqnarray}
Assignments whose selected path reaches $t_0$ have amplitude zero by Def.~\ref{def:q_state_wfbdd}. Hence $\sum_z |\alpha(z)|^2=1$. Also, for each $z$, $|\alpha(z)|^2\in[0,1]$ because the relevant outgoing pairs are locally normalized and the free-variable prefactor is at most one. This proves Lemma~\ref{lem:1}.

\subsection{Proof details for Theorem~\ref{thm:WS}}
\label{apdx:proof_theorem1}

First of all, we show that Algorithm~\ref{alg:WFBDD_SG} gives a stack $S$ of all internal nodes of $\textrm{WFBDD}_f$ with a linear total ordering such that for every directed edge $(u,v)$ out from an internal node $u$ to another internal node $v$, $u$ comes before $v$ in the ordering, {\it i.e.}, a topological sort.
In Algorithm~\ref{alg:WFBDD_SG}, a set $Q=\{ r \}$ consisting of the root $r$ is prepared (Line~\ref{alg:SG2}), $r$ is removed from $Q$ (Line~\ref{alg:SG8}) and pushed to $S$ (Line~\ref{alg:SG9}).
Since $r$ has indegree $0$, $S$ follows the ordering.
Next, for each $b$-edge $e_b(r)$ out from $r$ to its child node $h_b(r)$, we remove $e_b(r)$ from $E$ and check the indegree of $h_b(r)$ and whether $h_b(r)$ is a terminal node (Lines~\ref{alg:SG12}-\ref{alg:SG14}).
If the indegree is zero and $h_b(u) \neq t_0, t_1$, then $h_b(r)$ is added to $Q$ (Line~\ref{alg:SG15}).
The added node $h_b(r)$ is removed from $Q$ and pushed to $S$ in some round of the while loop. The child node $h_b(r)$ follows the ordering because the indegree zero means that all the directed edges whose child node is $h_b(r)$ have been removed from $E$, and the tail nodes of those directed edges have been pushed to $S$.
Repeating this process, we order all the nodes. If there is an unordered internal node $u$, then the indegree of $u$ must be nonzero, which means another unordered internal node.
Repeating this discussion leads to the conclusion that the root $r$ would be unordered, which contradicts the fact that $r$ is first pushed to $S$. Let $\prec$ denote the total ordering.

We show that Algorithm~\ref{alg:WFBDD_SG} gives the following state by induction:
\begin{eqnarray}
    \ket{\psi_u}
    =
    \sum_{z\in\{0,1\}^n} \alpha_u(z)\ket{z}\ket{g_u(z)},
    \label{eq:state_invariant_original_style_revised}
\end{eqnarray}
where $\alpha_u(z)$ and $g_u(z)$ are defined as follows. Define $V_{\preceq u}:=\{v\in V\setminus\{t_0,t_1\}\mid v\preceq u\}$ as a set of an internal node $u$ and all the preceding nodes to $u$, and $P(z)=(V(z),E(z))$ as the path derived by starting from the root $r$ and selecting the $z_i$-edge in each internal node until reaching a terminal node. Then, $\alpha_u(z)$ is defined as
\begin{eqnarray}
    \alpha_u(z)
    =
    \frac{1}{\sqrt{2^{n-|V_{\preceq u}\cap V(z)|}}}
    \prod_{v\in V_{\preceq u}\cap V(z)} w_{z_{i(v)}}(v),
    \label{eq:alpha_u_original_style_revised}
\end{eqnarray}
where terminal nodes are not included in the product. If the selected prefix has already taken a zero-weight branch, this formula gives $\alpha_u(z)=0$.
We define $g_u(z)$ as the $(|V|-2)$-bit sequence such that each bit is assigned to each node $v\in V$ except the terminal nodes $t_0$ and $t_1$, and each bit value assigned to the node $v$ is $1$ if $v$ has been activated along the selected path up to that point, and it is $0$ otherwise.

When $u=r$, the controlled-$H$ gate and the controlled-$U(r)$ gate controlled by the qubit $r$ set to $\ket{1}$ are applied to the qubit $x_{i(r)}$ set to $\ket{+}$ (Lines~\ref{alg:SG3}, \ref{alg:SG4}, \ref{alg:SG10}, and \ref{alg:SG11}).
Since $r$ is relevant and the WFBDD is QSP-admissible, the outgoing pair $(w_0(r),w_1(r))$ is locally normalized, and the unitary in Line~\ref{alg:SG11} of Algorithm~\ref{alg:WFBDD_SG} maps $\ket{0}$ to
\begin{eqnarray}
    w_0(r)\ket{0}+w_1(r)\ket{1}.
\end{eqnarray}
Thus, Eq.~(\ref{eq:alpha_u_original_style_revised}) holds for the root $r$,
\begin{eqnarray}
    \alpha_r(z)
    =
    \frac{1}{\sqrt{2^{n-1}}}
    w_{z_{i(r)}}(r).
\end{eqnarray}
Furthermore, for the child node $h_0(r)\neq t_0,t_1$, the qubit $x_{i(r)}$ is flipped, and the Toffoli gate controlled by the qubits $r$ and $x_{i(r)}$ is applied to the qubit $h_0(r)$. Then, the qubit $x_{i(r)}$ is flipped again (Lines~\ref{alg:SG17}-\ref{alg:SG20}). For the child node $h_1(r)\neq t_0,t_1$, the same operation is also performed except the two flipping operations. Remembering that the qubit $r$ is set to $\ket{1}$ (Line~\ref{alg:SG4}), we derive that the invariant $g_r(z)$ holds.

Assume that $\alpha_u(z)$ and $g_u(z)$ satisfy their definitions for an internal node $u$ and any internal node preceding the node, and consider the internal node $v$ following the node $u$ in the ordering. For an arbitrary path $P(z)=(V(z),E(z))$ from the root $r$, through $v$, and to one of the terminal nodes, let $e_b(u)=(u,v)\in E(z)$ and $e_{b'}(v)=(v,w)\in E(z)$ denote directed edges into and out from the node $v$, respectively.
If $v\notin V(z)$, then the register assigned to $v$ is $0$ on the component indexed by $z$, so the controlled gates at $v$ do not act on that component. If $v\in V(z)$ and the component has nonzero amplitude, then the register assigned to $v$ is $1$. Since the diagram is an FBDD, the variable $x_{i(v)}$ has not appeared earlier on the same path, and its qubit is still in the uniform state $\ket{+}$ before the controlled-$H$ gate at $v$ is applied.
Applying both of the controlled-$H$ gate and the controlled-$U(v)$ gate, controlled by the qubit $v$, to qubit $x_{i(v)}$ set to $\ket{+}$ (Lines~\ref{alg:SG3}, \ref{alg:SG10}, and \ref{alg:SG11}), we obtain that
\begin{eqnarray}
    \alpha_v(z)
    &=&
    \alpha_u(z)\sqrt{2}\,w_{z_{i(v)}}(v) \\
    &=&
    \frac{1}{\sqrt{2^{n-|V_{\preceq v}\cap V(z)|}}}
    \prod_{v'\in V_{\preceq v}\cap V(z)}w_{z_{i(v')}}(v').
\end{eqnarray}
If the branch selected at $v$ lies outside all root-to-$t_1$ paths, then its weight is zero by Condition~3 of Def.~\ref{def:qsp_admissible_wfbdd}, and the same formula gives zero amplitude. If an irrelevant node is ever reached, the amplitude of that component is already zero, so the arbitrary choice of $U(v)$ at irrelevant nodes does not affect the final state.
Furthermore, when $b'=0$, the qubit $x_{i(v)}$ is flipped, and the Toffoli gate controlled by the qubits $v$ and $x_{i(v)}$ is applied to the qubit $w$, which is the child node $h_{b'}(v)$ of the $b'$-edge $e_{b'}(v)$. Then, the qubit $x_{i(v)}$ is flipped again (Lines~\ref{alg:SG17}-\ref{alg:SG20}). When $b'=1$, the same operation is also performed except the two flipping operations. Thus, $g_v(z)$ also satisfies its definition. This completes the induction.

After all internal nodes have been processed, Eq.~(\ref{eq:alpha_u_original_style_revised}) gives the product of all selected branch weights on the root-to-terminal path. If $P(z)$ reaches $t_1$, this is exactly the accepting-path amplitude in Def.~\ref{def:q_state_wfbdd}, because the number of queried variables is $|V(z)|-1$ when the terminal node is included in $V(z)$. If $P(z)$ reaches $t_0$, then the path contains a first edge that lies on no root-to-$t_1$ path, and Condition~3 of Def.~\ref{def:qsp_admissible_wfbdd} makes the corresponding branch weight zero. Thus the amplitude agrees with the rejecting case of Def.~\ref{def:q_state_wfbdd}.

The undo stage pops the nodes from $S$ in the reverse topological order and reapplies the same child-recording Toffoli gates. These Toffoli gates are self-inverse. Because parents are undone after their children, the controls needed to uncompute each selected child register are still present when the corresponding Toffoli is applied. Hence all internal-node registers except the root register are returned to $\ket{0}$, while the root register remains in its initial state $\ket{1}$. The variable-register amplitudes are unchanged during this undo stage, and the final joint state is
\begin{eqnarray}
    \left(\sum_{z\in\{0,1\}^n}\alpha(z)\ket{z}\right)
    \ket{1}_r
    \bigotimes_{v\in V\setminus\{r,t_0,t_1\}}\ket{0}_v,
\end{eqnarray}
where $\alpha(z)$ is the amplitude in Def.~\ref{def:q_state_wfbdd}. Therefore the variable register contains the desired quantum state.

Finally, each internal node contributes one controlled-$H$ gate and one controlled-$U$ gate, and each outgoing edge contributes only a constant number of Toffoli gates in the compute and uncompute stages. Since a BDD has two outgoing edges per internal node, the number of such gates is $O(|V|)$. Using a constant-size decomposition of Toffoli gates into single- and two-qubit gates gives the stated $O(|V|)$ gate count, and the internal-node registers require $|V|-2$ ancillary qubits, including the root register.

\subsection{Proof details for Theorem~\ref{thm:uniform_wfbdd}}
\label{apdx:proof_uniform}

We prove Theorem~\ref{thm:uniform_wfbdd} by following the steps of the uniform-weight construction: first the processing order, then the consistency of the variable sets, then the weight assignment, and finally the uniformity of the resulting amplitudes.

First, we justify the processing order used in Line~\ref{alg:UWD:line4}.
A reverse topological order can be obtained as follows. Starting from the root $r$, maintain a queue $Q$ of internal nodes whose incoming edges from unprocessed internal nodes have all been removed. Initialize $Q=\{r\}$ and an empty list $T$. Whenever a node $u$ is removed from $Q$, append $u$ to $T$, remove the two outgoing edges of $u$, and add an internal child $h_b(u)$ to $Q$ once all its incoming edges have been removed. Since the FBDD is a finite directed acyclic graph, this procedure lists all internal nodes in a topological order, in which every parent appears before its internal children. Reversing the list $T$ therefore gives an order in which every internal child is processed before its parent. Terminal nodes are already initialized by $R(t_0)=R(t_1)=X$ and $M(t_0)=0,M(t_1)=1$. Hence, when Algorithm~\ref{alg:UWD} processes an internal node $u$, the quantities associated with both children $h_0(u)$ and $h_1(u)$ have already been computed.

Second, we explain the consistency of the remaining-variable sets. The terminal convention $R(t_0)=R(t_1)=X$ means that, before any query is specified below a terminal, all variables are formally available as free variables. Suppose that the sets for the two children of an internal node $u$ have already been assigned consistently. The common set
\begin{eqnarray}
    Y(u)=R(h_0(u))\cap R(h_1(u))
\end{eqnarray}
contains the variables that can be kept common to both branches out of $u$. Since the diagram is an FBDD, the variable $x_{i(u)}$ queried at $u$ cannot appear again below either child and hence it belongs to this common set. Thus
\begin{eqnarray}
    R(u)=Y(u)\setminus\{x_{i(u)}\}
\end{eqnarray}
is the remaining-variable set after the query at $u$ has been accounted for. For a branch $u\to h_b(u)$, the variables in $R(h_b(u))\setminus Y(u)$ are precisely the branch-dependent variables that are skipped on this branch relative to the common variables kept at $u$. This is the self-consistency condition for the remaining-variable sets, expressed in terms of $R(u)$ and $Y(u)$.

Third, we consider the effective branch counts. By the preceding consistency, each accepting continuation below the child $h_b(u)$ contributes $2^{|R(h_b(u))\setminus Y(u)|}$ assignments through the branch $u\to h_b(u)$, because the variables in $R(h_b(u))\setminus Y(u)$ are skipped on that branch. Therefore
\begin{eqnarray}
    M_b(u)=2^{|R(h_b(u))\setminus Y(u)|}M(h_b(u))
\end{eqnarray}
counts the accepting continuations through the $b$-edge after the common remaining variables have been factored out. The two branches correspond to the disjoint choices $x_{i(u)}=0$ and $x_{i(u)}=1$, so
\begin{eqnarray}
    M(u)=M_0(u)+M_1(u).
\end{eqnarray}
More generally, after the variables in $R(u)$ have been factored out, $M(u)$ counts the accepting patterns in the subdiagram rooted at $u$. This statement is immediate for $t_0$ and $t_1$ from $M(t_0)=0$ and $M(t_1)=1$, and the branch-count formula above gives the induction step because the two branches correspond to disjoint values of $x_{i(u)}$. Applying this recursive counting statement to the root gives
\begin{eqnarray}
    |f|=2^{|R(r)|}M(r).
    \label{eq:root_count_original_style_revised}
\end{eqnarray}

For every relevant internal node $u$, there is at least one accepting continuation below $u$ and hence $M(u)>0$. The assigned weights satisfy
\begin{eqnarray}
    |w_0(u)|^2+|w_1(u)|^2
    =
    \frac{M_0(u)}{M(u)}+\frac{M_1(u)}{M(u)}=1.
\end{eqnarray}
Thus the outgoing weight pair at every relevant internal node is locally normalized. If a branch $u\to h_b(u)$ has no accepting continuation, then $M_b(u)=0$ and the assigned weight $w_b(u)$ is zero. Therefore the constructed WFBDD satisfies the accepting-support condition in Def.~\ref{def:qsp_admissible_wfbdd}. Since $|f|>0$, the underlying FBDD has at least one accepting input, and the constructed WFBDD is QSP-admissible.

It remains to show that the associated quantum-state semantics is the uniform superposition. Fix an accepting path
\begin{eqnarray}
    P:r=u_1\to u_2\to\cdots\to u_{\ell+1}=t_1,
\end{eqnarray}
and let $b_j$ be the branch selected at $u_j$, so that $u_{j+1}=h_{b_j}(u_j)$. Put
\begin{eqnarray}
    s_j=|R(u_{j+1})\setminus Y(u_j)|.
\end{eqnarray}
The self-consistency of the remaining-variable sets implies that the variables not queried on the accepting path $P$ are exactly the variables in $R(r)$ together with the branch-dependent skipped variables counted by the $s_j$'s. Hence
\begin{eqnarray}
    n-|V_P|+1=|R(r)|+\sum_{j=1}^{\ell}s_j,
    \label{eq:free_variables_original_style_revised}
\end{eqnarray}
where $V_P$ is the set of nodes on $P$, including the terminal node.

Using the definition of the weights and the effective branch counts, the product of the weights along the accepting path telescopes:
\begin{eqnarray}
    \prod_{j=1}^{\ell} w_{b_j}(u_j)
    &=&
    \prod_{j=1}^{\ell}\sqrt{\frac{M_{b_j}(u_j)}{M(u_j)}} \\
    &=&
    \prod_{j=1}^{\ell}\sqrt{\frac{2^{s_j}M(u_{j+1})}{M(u_j)}}
    =
    \sqrt{\frac{2^{\sum_{j=1}^{\ell}s_j}}{M(r)}},
    \label{eq:weight_telescoping_original_style_revised}
\end{eqnarray}
where we used $M(t_1)=1$ in the last equality. Therefore, for every assignment $z$ whose selected path is this accepting path, Def.~\ref{def:q_state_wfbdd} gives
\begin{eqnarray}
    \alpha(z)
    &=&
    \frac{1}{\sqrt{2^{n-|V_P|+1}}}
    \prod_{j=1}^{\ell} w_{b_j}(u_j) \\
    &=&
    \frac{1}{\sqrt{2^{|R(r)|+\sum_j s_j}}}
    \sqrt{\frac{2^{\sum_j s_j}}{M(r)}}
    =
    \frac{1}{\sqrt{2^{|R(r)|}M(r)}}
    =
    \frac{1}{\sqrt{|f|}},
\end{eqnarray}
where the last equality follows from Eq.~(\ref{eq:root_count_original_style_revised}). If an assignment is rejected, its selected path reaches $t_0$, and Def.~\ref{def:q_state_wfbdd} assigns amplitude zero. Thus the quantum-state semantics of the constructed WFBDD is exactly $\ket{\phi_f}$.

Finally, Algorithm~\ref{alg:UWD} processes each internal node once and inspects its two outgoing edges. Hence it uses $O(|V|)$ node/edge queries to the supplied FBDD. If the remaining-variable sets are stored explicitly, the intersections and set differences over $n$ variables require $O(n|V|)$ classical set-processing time.

\section{Quantum Random Access Memory (QRAM)} \label{Apdx:A}

This appendix discusses one possible implementation of coherent access to classical BDD data. The qROM/QRAM resources described here should be added separately when such coherent access is required; they are not included in the state-preparation circuit-size statements in the main text.

To access the classical information in the BDD data structure quantumly, we consider a quantum random access memory (QRAM) that stores quantum-accessible classical bits. 
In the implicit model of QRAM in \cite{Matteo2019FaultTolerantRE}, locations in memory are addressed by $n$-bit strings $x_1 x_2 \cdots x_n$ and are queried by inputting the associated computational-basis state into a circuit:
\begin{eqnarray}
\ket{x_1 x_2 \cdots x_n} \ket{0} \mapsto \ket{x_1 x_2 \cdots x_n} \ket{b_{x_1 x_2 \cdots x_n}},
\end{eqnarray}
where $b_{x_1 x_2 \cdots x_n} \in \{ 0,1 \}$ is the stored value. 
Formally, one can realize the QRAM by implementing a sequence of mixed-polarity multi-controlled Toffolis (MPMCTs) conditioned on the control bits representing the memory address of a $1$.
There are two known decompositions of quantum circuits into 1- and 2-qubit operations of the Clifford+$T$ gate set. One is a large-depth, small-width circuit. The other is a small-depth, large-width circuit~\cite{Matteo2019FaultTolerantRE}.

Let $2^q$ be the number of $1$s in a database. 
To derive the small-depth, large-width circuit, first, the address bits input to the first registers of qubits are copied to the other ancilla registers of qubits by using a log-depth cascade of \textsc{CNOT}s. 
Second, each register performs an MPMCT whose target is one of the additional $2^q$ qubits.
The state of these additional qubits is prepared in a superposition over even-parity states.
Finally, one computes the parity, copies it to an additional ancilla qubit, copies it back, and then uncomputes the parity and the address fanout (See Fig. 4 in \cite{Matteo2019FaultTolerantRE}).
From \cite{Matteo2019FaultTolerantRE}, the required number of qubits (the width of the circuit) is $O(n2^q)$, and the required depth of the circuit  scales linearly in both $n$ and $q$.

\bibliography{qnl}

\end{document}